\def\thm@space@setup{\thm@preskip=2pt
	\thm@postskip=2pt \itshape}
\newtheoremstyle{newstyle}      
{} 
{} 
{\mdseries} 
{} 
{\bfseries} 
{.} 
{ } 
{} 
\theoremstyle{newstyle}
\newtheorem{theorem}{Theorem}
\newtheorem{lemma}{Lemma}
\newtheorem{corollary}{Corollary}
\theoremstyle{definition}
\newtheorem{definition}{Definition}
\theoremstyle{remark}
\newtheorem{remark}{Remark}
\begin{document}
	\sloppy
	
	\setlength{\belowcaptionskip}{-6pt}
	\setlength{\abovedisplayskip}{1mm}
	\setlength{\belowdisplayskip}{1mm}
	\setlength{\abovecaptionskip}{1mm}
	
	\title{The Exact Rate-Memory Tradeoff\\ for Caching with Uncoded Prefetching}
	\author{Qian~Yu,~\IEEEmembership{Student~Member,~IEEE,}
	Mohammad~Ali~Maddah-Ali,~\IEEEmembership{Member,~IEEE,}
        and~A.~Salman~Avestimehr,~\IEEEmembership{Senior Member,~IEEE}
        \thanks{Manuscript received September 26, 2016; revised August 09, 2017; accepted November 29, 2017.
A shorter version of this paper was presented at ISIT, 2017 \cite{yu2016exactisit}. }
\thanks{Q.~Yu and A.S.~Avestimehr are with the Department of Electrical Engineering, University of Southern California, Los Angeles, CA, 90089, USA (e-mail:  qyu880@usc.edu; avestimehr@ee.usc.edu).}
\thanks{M. A. Maddah-Ali is with Department of Electrical Engineering, Sharif University of Technology, Tehran, 11365, Iran (e-mail: maddah\_ali@sharif.edu).}
\thanks{
Communicated by P. Mitran, Associate Editor for Shannon Theory. }
\thanks{
This work is in part supported by NSF grants CCF-1408639, NETS-1419632, and ONR award N000141612189. }
\thanks{Copyright (c) 2017 IEEE. Personal use of this material is permitted.  However, permission to use this material for any other purposes must be obtained from the IEEE by sending a request to pubs-permissions@ieee.org.}}
	\maketitle
	
	\begin{abstract}
	
	We consider a basic cache network, in which a single server is connected to multiple users via a shared bottleneck link. The server has a database of 
	files (content). Each user has an isolated memory that can be used to cache content in a prefetching phase. In a following delivery phase, each user requests a file from the database, and the server needs to deliver users' demands as efficiently as possible  by taking into account their cache contents. 
	We focus on an important and commonly used class of prefetching schemes, where the caches are filled with uncoded data. We provide the exact characterization of the rate-memory tradeoff for this problem, by deriving both the \emph{minimum average rate} (for a uniform file popularity) and the \emph{minimum peak rate} required on the bottleneck link for a given cache size available at each user.
	In particular, we propose a novel caching scheme, which strictly improves the state of the art by exploiting commonality among user demands. We then demonstrate the exact optimality of our proposed scheme through a matching converse, by dividing the set of all demands into types, and showing that the placement phase in the proposed caching scheme is universally optimal for all types. 	Using these techniques, we also fully characterize the rate-memory tradeoff for a decentralized setting, in which users fill out their cache content  without any coordination.

	\end{abstract}
	
	\begin{IEEEkeywords} 
Caching, Coding, Rate-Memory Tradeoff, Information-Theoretic Optimality
\end{IEEEkeywords}

	\section{Introduction}
	Caching is a commonly used approach to reduce traffic rate in a network system during peak-traffic times, by duplicating part of the content in the memories distributed across the network. In its basic form, a caching system operates in two phases: (1) a placement phase, where each cache is populated up to its size, and (2) a delivery phase, where the users reveal their requests for content and the server has to deliver the requested content. During the delivery phase, the server exploits the content of the caches to reduce network traffic.
	
	Conventionally, caching systems have been based on uncoded unicast delivery where the objective is mainly to maximize the hit rate, i.e. the chance that the requested content can be delivered locally
	  \cite{sleator1985amortized, dowdy82,almeroth96,dan96,korupolu99,meyerson01,baev08, borst10}. While in systems with single cache memory this approach can achieve optimal performance, it has been recently shown in \cite{maddah-ali12a} that for multi-cache systems, the optimality no longer holds. In \cite{maddah-ali12a}, 
	 an information theoretic framework for 
	 multi-cache systems was introduced, and it was shown that coding can offer a significant gain that scales with the size of the network.
	Several coded caching schemes have been proposed since then \cite{DBLP:journals/corr/Chen14h, wan2016caching, sahraei2016k, tian2016caching, amiri2016fundamental, amiri2016coded}. 
		 The caching problem has also been extended in various directions, including decentralized caching \cite{maddah-ali13}, online caching \cite{pedarsani13}, caching with nonuniform demands \cite{niesen13, zhang15, ji2015order,ramakrishnan2015efficient}, hierarchical caching \cite{hachem14, karamchandani14, hachem15}, device-to-device caching \cite{ji14b}, cache-aided interference channels \cite{maddah2015cache, naderializadeh2016fundamental, hachem2016layered, DBLP:journals/corr/HachemND16a}, caching on file selection networks \cite{wang2015information, DBLP:journals/corr/WangLG15, lim2016information}, caching on broadcast channels \cite{timo2015joint, bidokhti2016erasure, bidokhti2016noisy, bidokhti2016upper}, and caching for channels with delayed feedback with channel state information  \cite{zhang2015fundamental, zhang2015coded}. The same idea is also useful in the context of distributed computing, in order to take advantage of extra computation to reduce the communication load \cite{2016arXiv160407086L, globedcd16, li2016scalable,7901473, yu2017howto}.

		
	Characterizing the exact rate-memory tradeoff in the above caching scenarios is an active line of research. Besides developing better achievability schemes, there have been efforts in tightening the outer bound of the rate-memory tradeoff 
	\cite{ghasemi15,   lim2016information, sengupta15, DBLP:journals/corr/WangLG16, tian2016symmetry, prem2015critical}. Nevertheless, in almost all scenarios, there is still a gap between the state-of-the-art communication load and the converse, leaving the exact rate-memory tradeoff an open problem.

		In this paper, we focus on an important class of caching schemes, where the prefetching scheme is required to be uncoded. In fact, almost all caching schemes proposed for the above mentioned problems use uncoded prefetching.
	 As a major advantage, uncoded prefetching allows us to handle asynchronous demands without increasing the communication rates, by dividing files into smaller subfiles  \cite{maddah-ali13}. 
		Within this class of caching schemes, we characterize the exact rate-memory tradeoff for both the \emph{average rate} for uniform file  popularity  and the \emph{peak rate}, in both centralized and decentralized settings, for all possible parameter values. 

		In particular, we first propose a novel caching strategy for the centralized setting (i.e., where the users can coordinate in designing the caching mechanism, as considered in  \cite{maddah-ali12a}),  which strictly improves the state of the art, reducing both the average rate and the peak rate.
		We exploit commonality among user demands by showing that the scheme in \cite{maddah-ali12a} may introduce redundancy in the delivery phase, and proposing a new scheme that effectively removes all such redundancies in a systematic way.

	 In addition, we demonstrate the exact optimality of the proposed scheme through a matching converse. The main idea is to divide the set of all demands into smaller subsets (referred to as types), and derive tight lower bounds for the minimum peak rate and the minimum average rate on each type separately. We show that, when the prefetching is uncoded, the rate-memory tradeoff can be completely characterized using this technique, and the placement phase in the proposed caching scheme universally achieves those minimum rates on all types. 
	
	Moreover, we extend the techniques we developed for the centralized caching problem to characterize the exact rate-memory tradeoff in the decentralized setting (i.e. where the users cache the contents independently without any coordination, as considered in \cite{maddah-ali13}). 
	Based on the proposed centralized caching scheme, we develop a new decentralized caching scheme that strictly improves the state of the art \cite{maddah-ali13, amiri2016coded}. In addition, we formally define the framework of decentralized caching, and prove matching converses given the framework, showing that the proposed scheme is optimal.

	To summarize, the main contributions of this paper are as follows:
	
	\begin{itemize}
	    \item Characterizing the rate-memory tradeoff for average rate, by developing a novel caching design and proving a matching information theoretic converse.
	    \item Characterizing the rate-memory tradeoff for peak rate, 
	    by extending the achievability and converse proofs to account for the worst case demands.
	    \item Characterizing the rate-memory tradeoff for both average rate and peak rate in a decentralized setting, where the users cache the contents independently without coordination.
	   	\end{itemize}
	Furthermore, in one of our recent works \cite{yu2017characterizing}, we have shown that the achievablity scheme we developed in this paper also leads to the yet known tightest characterization (within factor of $2$) in the general problem with coded prefetching, for both average rate and peak rate, in both centralized and decentralized settings.
	
	 The problem of caching with uncoded prefetching was initiated 
	 in \cite{kai2016optimality,wan2016caching}, which showed that the scheme in \cite{maddah-ali12a} is optimal when considering \emph{peak rate} and \emph{centralized caching}, if there are more files than users. Although not stated in \cite{kai2016optimality,wan2016caching}, the converse bound in our paper for the special case of peak rate and centralized setting could have also been derived using their approach. 	 
	 	 In this paper however, we introduce the novel idea of demand types, which allows us to go beyond and characterize the rate-memory tradeoff for both peak rate and average rate for all possible parameter values, in both centralized and decentralized settings. Our result covers the peak rate centralized setting, as well as strictly improves the bounds in all other cases.
	 More importantly,  
	 we introduce a new achievability scheme, which strictly improves the scheme in~\cite{maddah-ali12a}.

	 
	 The rest of this paper is organized as follows. Section \ref{sec:sys} formally establishes a centralized caching framework, and defines the main problem studied in this paper. Section \ref{sec:main} summarizes the main result of this paper for the centralized setting.
	 Section \ref{sec:opt} describes and demonstrates the optimal centralized caching scheme that achieves the minimum expected rate and the minimum peak rate. Section \ref{sec:conv} proves matching converses that show the optimality of the proposed centralized caching scheme. Section \ref{sec:ext} extends the techniques we developed for the centralized caching problem to characterize the exact rate-memory tradeoff in the decentralized setting.

		\section{System Model and Problem Definition}\label{sec:sys}
	
	In this section, we formally introduce the system model for the centralized caching problem. Then, we define the rate-memory tradeoff based on the introduced framework, and state the main problem studied in this paper.
	
	\subsection{System Model}
	We consider a system with one server connected to $K$ users through a shared, error-free link (see Fig. \ref{fig:net}). The server has access to a database of $N$ files $W_1, . . . , W_N$, each of size $F$ bits.\footnote{Although we only focus on binary files, the same techniques developed in this paper can also be used for cases of q-ary files and files using a mixture of different alphabets, to prove that same rate-memory trade off holds.} We denote the $j$th bit in file $i$ by $B_{i,j}$, and we assume that all bits in the database are i.i.d. Bernoulli random variables with $p=0.5$.
	Each user has an isolated cache memory of size $MF$ bits, where $M\in[0,N]$. For convenience, we define parameter $t=\frac{KM}{N}$.
	
    	\begin{figure}[htbp]
    		\centering
    		\includegraphics[width=0.4\textwidth]{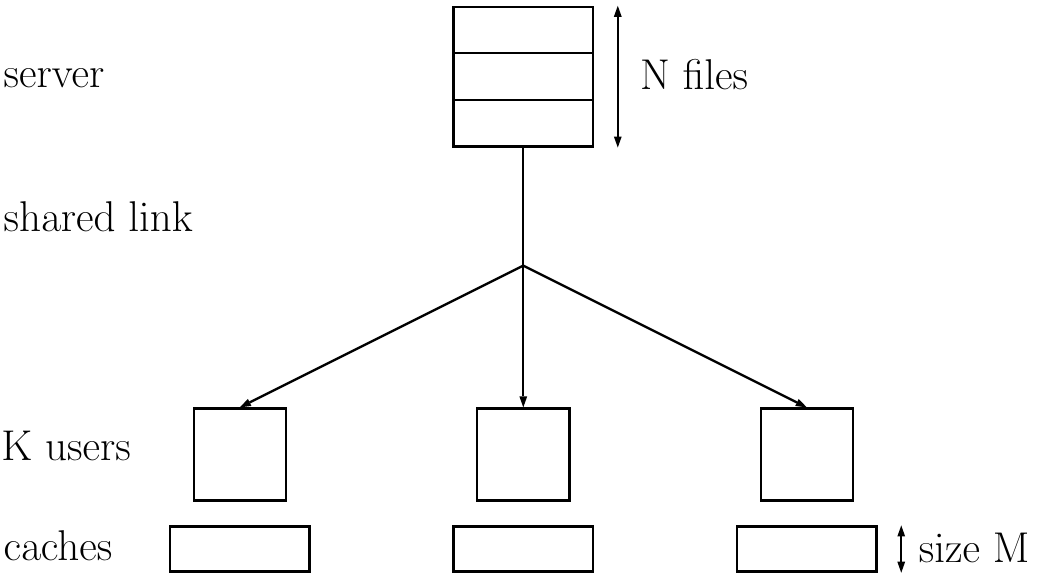}
    		\caption{Caching system considered in this paper. The figure illustrates the case where $K=N=3$, $M=1$.}
    		\label{fig:net}
    	\end{figure}

	The system operates in two phases, a placement phase and a delivery phase. 
	In the placement phase, users are given access to the entire database, and each user can fill their cache using the database. However, instead of allowing coding in prefetching \cite{maddah-ali12a}, we focus on an important class of prefetching schemes, referred to as uncoded prefetching schemes:
	\theoremstyle{definition}
    \begin{definition} An \textit{uncoded prefetching scheme} is where each user $k$ selects no more than $MF$ bits from the database and stores them in its own cache, without coding. Let $\mathcal{M}_k$ denote the set of indices of the bits chosen by user $k$, then we denote the prefetching as $$\boldsymbol{\mathcal{M}}=(\mathcal{M}_1,...,\mathcal{M}_K).$$
    \end{definition}
	In the delivery phase, only the server has access to the
	database. Each user $k$ requests one of the files 
	in the database. To characterize user requests, we define \textit{demand} $\boldsymbol{d}=\left(d_1,...,d_K\right)$, where $d_k$ is the index of the file requested by user $k$. We denote the number of distinct requested files in $\boldsymbol{d}$ by $N_{\textup{e}}(\boldsymbol{d})$, and denote the set of all possible demands by $\mathcal{D}$, i.e.,  $\mathcal{D}=\{1,...,N\}^K$.

	The server is informed of the demand and proceeds by generating a signal $X$ of size $RF$ bits as a function of $W_{1},...,W_{N}$, and transmits the signal over the shared link. $R$ is a fixed real number given the demand $\boldsymbol{d}$. The values $RF$ and $R$ are referred to as the load and the rate of the shared
	link, respectively. Using the values of bits in $\mathcal{M}_k$ and the signal $X$ received over the shared
	link, each user $k$ aims to reconstruct their requested file $W_{d_k}$. 
		
	\subsection{Problem Definition}	
	 
		Based on the above framework, we define the rate-memory tradeoff for the average rate using the following terminology. Given a prefetching $\boldsymbol{\mathcal{M}}=(\mathcal{M}_1,...,\mathcal{M}_K)$, we say a communication rate $R$ is \textit{$\epsilon$-achievable} for demand $\boldsymbol{d}$ if and only if there exists a message $X$ of length $RF$ such that every active user $k$ is able to recover its desired file $W_{d_k}$ with a probability of error of at most $\epsilon$. This is rigorously defined as follows:  
		
		\begin{definition}
		$R$ is \textit{$\epsilon$-achievable} given a prefetching $\boldsymbol{\mathcal{M}}$ and a demand $\boldsymbol{d}$ if and only if we can find an encoding function $\psi: \{0,1\}^{NF}\rightarrow \{0,1\}^{RF}$ that maps the $N$ files to the message:
		\begin{align}
		    X=\psi(W_1,...,W_N), \nonumber
		\end{align}
		and $K$ decoding functions $\mu_k: \{0,1\}^{RF}\times \{0,1\}^{|\mathcal{M}_k|}\rightarrow \{0,1\}^{F}$ that each map the signal $X$ and the cached content of user $k$ to an estimate of the requested file $W_{d_k}$, denoted by $\hat{W}_{\boldsymbol{d},k}$:
		\begin{align}
		    \hat{W}_{\boldsymbol{d},k}=\mu_k(X,\{B_{i,j}\ |\ (i,j)\in \mathcal{M}_k\}), \nonumber
		\end{align}
		such that 
		\begin{align}
		    \mathbbm{P} (\hat{W}_{\boldsymbol{d},k}\neq W_{d_k} )\leq \epsilon. \nonumber
		\end{align}
		\end{definition}

		We denote $R_\epsilon^*(\boldsymbol{d},\boldsymbol{\mathcal{M}})$ as the minimum $\epsilon$-achievable rate given $\boldsymbol{d}$ and $\boldsymbol{\mathcal{M}}$. 
		Assuming that all users are making requests independently, and that all files are equally likely to be requested by each user, the probability distribution of the demand $\boldsymbol{d}$ is uniform on $\mathcal{D}$. We define the average rate $R_\epsilon^*(\boldsymbol{\mathcal{M}})$ as the expected minimum achievable rate given a prefetching $\boldsymbol{\mathcal{M}}$ under uniformly random demand, i.e., 
		\begin{equation}
		R_\epsilon^*(\boldsymbol{\mathcal{M}})=\mathbb{E}_{\boldsymbol{d}}[ R_\epsilon^*(\boldsymbol{d},\boldsymbol{\mathcal{M}})].\nonumber
		\end{equation}
	
		 The rate-memory tradeoff for the average rate is essentially finding the minimum average rate $R^*$, for any given memory constraint $M$, that can be achieved by prefetchings satisfying this constraint with vanishing error probability for sufficiently large file size. Rigorously, we want to find
		\begin{equation}
			R^*=\sup_{\epsilon>0} \adjustlimits\limsup_{F\rightarrow+\infty}\min_{\boldsymbol{\mathcal{M}}} R^*_\epsilon(\boldsymbol{\mathcal{M}}).\nonumber
		\end{equation}
		as a function of $N$, $K$, and $M$.
		
		Similarly, the rate-memory tradeoff for peak rate is essentially finding the minimum peak rate, denoted by $R^*_{\textup{peak}}$, which is formally defined in Appendix \ref{app:worst}.

\section{Main Results}\label{sec:main}

We state the main result of this paper in the following theorem.
\begin{theorem}\label{th:p}
	
	For a caching problem with $K$ users, a database of $N$ files, local cache size of $M$ files at each user, and parameter $t=\frac{KM}{N}$, we have
	\begin{align}\label{eq:unif_pre}
	R^*=\mathbb{E}_{\boldsymbol{d}}\left[ \frac{\binom{K}{t+1}-\binom{K-N_{\textup{e}}(\boldsymbol{d})}{t+1}}{\binom{K}{t}}\right],
	\end{align}
	for $t\in\{0,1,...,K\}$, where $\boldsymbol{d}$ is uniformly random on $\mathcal{D}=\{1,...,N\}^K$ and $N_{\textup{e}}(\boldsymbol{d})$ denotes the number of distinct requests in $\boldsymbol{d}$. Furthermore, for $t\notin\{0,1,...,K\}$, $R^*$ equals the lower convex envelope of its values at $t\in\{0,1,...,K\}$.\footnote{In this paper we define $\binom{n}{k}=0$ when $k>n$.}
\end{theorem}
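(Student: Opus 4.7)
The plan is to prove Theorem \ref{th:p} in two halves: a constructive achievability scheme attaining the right-hand side of \eqref{eq:unif_pre}, and a matching converse. For non-integer $t$, the convex-envelope statement follows on the achievability side by standard memory-sharing between two consecutive integer regimes and on the converse side from convexity of $R^*$ as a function of $M$, so the substantive work concerns integer $t\in\{0,1,\ldots,K\}$.

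For achievability I would start from the symmetric Maddah-Ali--Niesen placement: split each file $W_n$ into $\binom{K}{t}$ equal subfiles $W_{n,\mathcal{T}}$ indexed by $t$-subsets $\mathcal{T}\subseteq\{1,\ldots,K\}$, and let user $k$ cache exactly those $W_{n,\mathcal{T}}$ with $k\in\mathcal{T}$. The novelty lies in the delivery. For each demand $\boldsymbol{d}$ I would fix a set $\mathcal{L}$ of $N_{\textup{e}}(\boldsymbol{d})$ leader users, one per distinct requested file. For every $(t+1)$-subset $\mathcal{S}$ that intersects $\mathcal{L}$, transmit the MAN-style XOR $\bigoplus_{k\in\mathcal{S}}W_{d_k,\mathcal{S}\setminus\{k\}}$. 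Then I would show that for any $(t+1)$-subset $\mathcal{S}$ disjoint from $\mathcal{L}$ (there are $\binom{K-N_{\textup{e}}(\boldsymbol{d})}{t+1}$ such subsets) the corresponding XOR is already a linear combination of transmitted XORs and cached bits, because every non-leader user requests the same file as some leader and the repeated subfiles cancel pairwise. The resulting load is $\bigl(\binom{K}{t+1}-\binom{K-N_{\textup{e}}(\boldsymbol{d})}{t+1}\bigr)F/\binom{K}{t}$ bits, matching \eqref{eq:unif_pre} after averaging over $\boldsymbol{d}$.

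For the converse, I would adopt the demand-type strategy outlined in the introduction. Partition $\mathcal{D}$ into types according to the equivalence relation on users induced by their requests, so that $N_{\textup{e}}$ is constant inside each type. Given any uncoded prefetching $\boldsymbol{\mathcal{M}}$, parameterise it by occupancy numbers $a_{\mathcal{T}}$ equal to the fraction of database bits cached by exactly the user subset $\mathcal{T}$, subject to $\sum_{\mathcal{T}\ni k} a_{\mathcal{T}}\leq M/N$ for every $k$. For each type I would derive a cut-set/entropy lower bound on $R^*_\epsilon(\boldsymbol{d},\boldsymbol{\mathcal{M}})$ that, after $\epsilon\to 0$ and $F\to\infty$, becomes a linear functional of $(a_{\mathcal{T}})$; the main ingredient is a Han-style subset inequality applied to a carefully ordered concatenation of delivery messages for all permutations of a representative demand in the type. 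Averaging the per-type bounds under the uniform distribution on $\mathcal{D}$ reduces the converse to a linear program in $(a_{\mathcal{T}})$; its minimiser is the symmetric placement $a_{\mathcal{T}}=\mathbbm{1}_{|\mathcal{T}|=t}/\binom{K}{t}$, attaining the claimed rate.

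The main obstacle is the converse for the \emph{average} rate: it is not enough to have a tight per-type bound, one must show that the \emph{same} placement achieves tightness on every type \emph{simultaneously}, so that no clever re-weighting between common and rare demand patterns can beat the symmetric scheme. Establishing this universal optimality, together with the combinatorial identity that reshapes the averaged linear-programming bound into $\mathbb{E}_{\boldsymbol{d}}\bigl[\binom{K}{t+1}-\binom{K-N_{\textup{e}}(\boldsymbol{d})}{t+1}\bigr]/\binom{K}{t}$, is the delicate part of the argument; the corresponding peak-rate statement will follow by specialising the same type-based bound to a worst-case demand with $N_{\textup{e}}=\min(N,K)$.
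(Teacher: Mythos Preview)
Your proposal matches the paper's architecture closely: the same leader-based delivery on top of symmetric batch prefetching for achievability, and the same type-by-type converse culminating in the observation that symmetric placement is simultaneously optimal for every type. The only notable divergence is in the converse mechanics. The paper does not invoke a Han-style inequality or the fine occupancy vector $(a_{\mathcal{T}})$; instead it builds a genie-aided virtual user who is handed the leaders' caches one at a time, applies a single Fano/cut-set bound per demand, and then averages that bound over all file and user permutations of the demand (Appendix~\ref{app:keylemma}). The averaging symmetrises everything so the bound depends only on the coarse counts $a_n$ (number of database bits cached by exactly $n$ users), after which convexity of $c_n=\bigl(\binom{K}{n+1}-\binom{K-N_{\textup{e}}}{n+1}\bigr)/\binom{K}{n}$ together with Jensen and monotonicity finishes the job. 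Your LP in $(a_{\mathcal{T}})$ would also work but is more elaborate than needed once the permutation averaging has collapsed the dependence to $|\mathcal{T}|$. On the achievability side, your ``repeated subfiles cancel pairwise'' is the right intuition but slightly understates the combinatorics: the paper makes this precise via Lemma~\ref{dec_l}, which shows that summing $Y_{\mathcal{B}\setminus\mathcal{V}}$ over all representative sets $\mathcal{V}$ (one user per requested file) yields zero, because every subfile appears exactly twice under a transposition of two same-file users; this identity is what lets a non-leader recover each omitted XOR from the transmitted ones.
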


\begin{remark} 
	To prove Theorem \ref{th:p}, we propose a new caching scheme that strictly improves the state of the art \cite{maddah-ali12a}, which was relied on by all prior works considering the minimum average rate for the caching problem \cite{niesen13, zhang15,ji2015order,lim2016information}. 
	In particular, the rate achieved by the previous best known caching scheme 
	 equals the lower convex envelope of $\min\{\frac{K-t}{t+1}, \mathbb{E}_{\boldsymbol{d}}[N_{\textup{e}}(\boldsymbol{d})(1-\frac{t}{K})] \}$ at $t\in\{0,1,...,K\}$, which is strictly larger than $R^*$ when $N>1$ and $t<K-1$. 
	 	For example, when $K=30$, $N=30$, and $t=1$, the state-of-the-art scheme requires a communication rate of $14.12$, while the proposed scheme achieves the rate $12.67$, both rounded to two decimal places. 
	 	
	 	The improvement of our proposed scheme over the state of the art can be interpreted intuitively as follows. The caching scheme proposed in \cite{maddah-ali12a} essentially decomposes 
	  the problem into $2$ cases: in one case, the redundancy of user demands is ignored, and the information is delivered by satisfying different demands using single coded multicast transmission
	; in the other case, random coding is used to deliver the same request to multiple receivers
	. Our result demonstrates that the decomposition of the caching problem into these 2 cases is suboptimal, and our proposed caching scheme precisely accounts for the effect of redundant user demands.   
\end{remark}

\begin{remark}
The technique for finding the minimum average rate in the centralized setting can be straightforwardly extended to find the minimum peak rate, which was solved for $N\geq K$ \cite{kai2016optimality}. Here we show that we not only recover their result, but also fully characterize the rate for all possible values of $N$ and $K$, resulting in the following corollary, which will be proved in Appendix \ref{app:worst}.
\end{remark} 

\begin{corollary}
\label{cor:worst}
		For a caching problem with $K$ users, a database of $N$ files, a local cache size of $M$ files at each user, and parameter $t=\frac{KM}{N}$, we have 
		\begin{align}\label{eq:cent}
		R^*_{\textup{peak}}=
		\frac{\binom{K}{t+1}-\binom{K-\min\{K,N\}}{t+1}}{\binom{K}{t}}
		\end{align}
		for $t\in\{0,1,...,K\}$. Furthermore, for  $t\notin\{0,1,...,K\}$, $R^*_{\textup{peak}}$ equals the lower convex envelope of its values at $t\in\{0,1,...,K\}$.	
\end{corollary}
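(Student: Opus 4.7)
The plan is to reduce the peak-rate statement to the machinery already developed for Theorem \ref{th:p}. After formally defining $R^*_{\textup{peak}}$ as the infimum (over uncoded prefetchings $\boldsymbol{\mathcal{M}}$, with vanishing error as $F\to\infty$) of $\max_{\boldsymbol{d}\in\mathcal{D}} R^*_\epsilon(\boldsymbol{d},\boldsymbol{\mathcal{M}})$, I would handle achievability and converse separately, and then extend to non-integer $t$ by memory-sharing.

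For achievability, I would reuse the exact placement-and-delivery scheme constructed for Theorem \ref{th:p}. A key feature of that scheme, emphasized in Section \ref{sec:main}, is that the placement phase is \emph{universally} optimal across all demand types: for every $\boldsymbol{d}$, the delivery rate it achieves is exactly $\frac{\binom{K}{t+1}-\binom{K-N_{\textup{e}}(\boldsymbol{d})}{t+1}}{\binom{K}{t}}$. Since this expression is nondecreasing in $N_{\textup{e}}(\boldsymbol{d})$, and $N_{\textup{e}}(\boldsymbol{d})\leq\min\{K,N\}$ with the bound attained by any demand whose support has size $\min\{K,N\}$, taking the maximum over $\boldsymbol{d}$ yields precisely the right-hand side of \eqref{eq:cent}.

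For the converse, I would specialize the type-based converse that will be developed for Theorem \ref{th:p}. That argument partitions $\mathcal{D}$ into demand types and proves, for each type $\mathcal{T}$, a matching per-type lower bound depending only on $N_{\textup{e}}$. Since the worst-case rate over $\mathcal{D}$ is at least the average rate over any single type, it suffices to restrict attention to a type with $N_{\textup{e}}=\min\{K,N\}$, which is nonempty because one can always choose a demand vector whose support is any size-$\min\{K,N\}$ subset of $\{1,\dots,N\}$. Evaluating the per-type converse from Theorem \ref{th:p} at $N_{\textup{e}}=\min\{K,N\}$ yields the desired lower bound. The extension to $t\notin\{0,1,\dots,K\}$ then follows by a standard memory-sharing/time-sharing argument, which shows both that the achievable rate region is convex in $(M,R)$ and that the converse bounds at the integer points extend to their lower convex envelope.

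The main obstacle I anticipate is not in the high-level reduction itself but in making sure the per-type converse of Theorem \ref{th:p} remains tight at the extreme value $N_{\textup{e}}=\min\{K,N\}$ in both regimes $N\geq K$ and $N<K$. In particular, when $N<K$ one must verify that the auxiliary random variables and counting arguments used to bound any single type do not implicitly assume enough files to give every user a distinct request; this is analogous to why the $\binom{K-\min\{K,N\}}{t+1}$ term, rather than $\binom{K-N}{t+1}$ with the convention $\binom{n}{k}=0$ for $k>n$, is the correct saving. Once those two cases are checked, the corollary follows as a direct specialization of Theorem \ref{th:p}.
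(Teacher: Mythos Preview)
Your proposal is correct and follows essentially the same route as the paper: achievability via the per-demand rate formula from Section~\ref{sec:opt} maximized at $N_{\textup{e}}=\min\{K,N\}$, and the converse by applying the per-type lower bound (Lemma~\ref{lemma:univ}) to a type with $N_{\textup{e}}(\boldsymbol{s})=\min\{K,N\}$, noting that the peak rate within a type dominates the average rate within that type. Your worry in the last paragraph is unnecessary: the proof of Lemma~\ref{lemma:univ} only ever selects $N_{\textup{e}}(\boldsymbol{d})\leq\min\{K,N\}$ leaders with distinct requests, so nothing in the counting argument assumes $N\geq K$, and with the paper's convention $\binom{n}{k}=0$ for $k>n$ the expressions $\binom{K-\min\{K,N\}}{t+1}$ and $\binom{K-N}{t+1}$ coincide in both regimes.
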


\begin{remark}
	As we will discuss in Section \ref{sec:ext}, we can also extend the techniques that we developed for proving Theorem \ref{th:p} to the decentralized setting. The exact rate-memory tradeoff for both the average rate and the peak rate can be fully characterized using these techniques. Besides, the newly proposed decentralized caching scheme for achieving the minimum rates strictly improves the state of the art \cite{maddah-ali13, amiri2016coded}. 
\end{remark}

\begin{remark}
    Prior to this result, there have been several other works on this coded caching problem. Both centralized and decentralized settings have been considered, and many caching schemes using uncoded prefetching were proposed. Several caching schemes have been proposed focusing on minimizing the average communication rates  \cite{niesen13,zhang15,ji2015order,ramakrishnan2015efficient}. However in the case of uniform file popularity, the achievable rates provided in these works reduce to the results of \cite{maddah-ali12a} or \cite{maddah-ali13}, while our proposal strictly improves the state of the arts in both \cite{maddah-ali12a} and \cite{maddah-ali13} by developing a novel delivery strategy that exploits the commonality of the user demands. There have also been several proposed schemes that aim to minimize the peak rates \cite{wan2016caching,amiri2016coded}. The main novelty of our work compared to their results is that we not only propose an optimal design that strictly improves upon all these works through a leader based strategy, but also provide an intuitive  proof for its decodability. The decodability proof is based on the observation that the caching schemes proposed in \cite{maddah-ali12a} and \cite{maddah-ali13} may introduce redundancy in the delivery phase, while our proposed scheme
    provides a systematic way to \emph{optimally} remove all the redundancy, which allows delivering the same amount of information with strictly improved communication rates.
\end{remark}

 	\begin{figure}[htbp]
\centering
\begin{subfigure}{.45\textwidth}
  \centering
  \captionsetup{justification=centering}
  \includegraphics[width=0.95\linewidth]{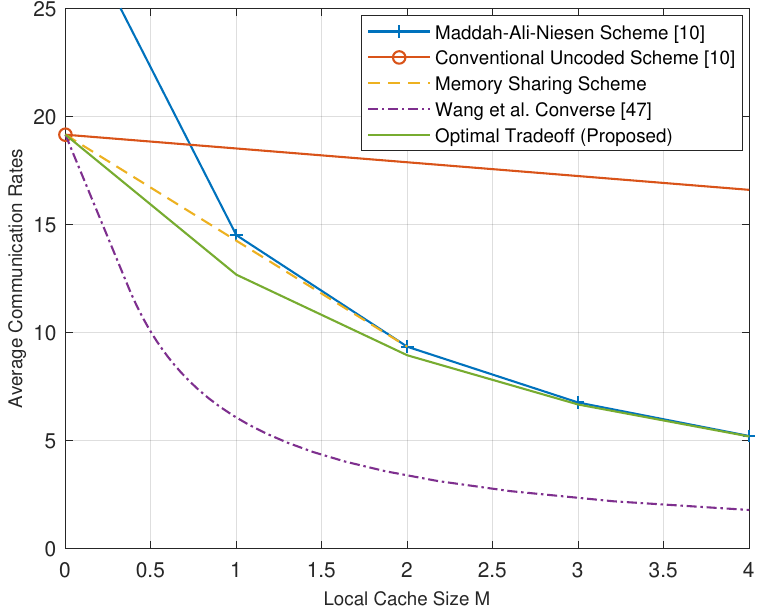}
  \caption{Average rates for $N=K=30$. For this scenario, the best communication rate stated in prior works is achieved by the memory-sharing between the conventional uncoded scheme \cite{maddah-ali12a} and the Maddah-Ali-Niesen scheme \cite{maddah-ali12a}. The tightest prior converse bound in this scenario is provided by \cite{DBLP:journals/corr/WangLG16}. }
\end{subfigure}

 	\vspace{8mm}
\begin{subfigure}{.45\textwidth}
  \centering
  \captionsetup{justification=centering}
  \includegraphics[width=0.95\linewidth]{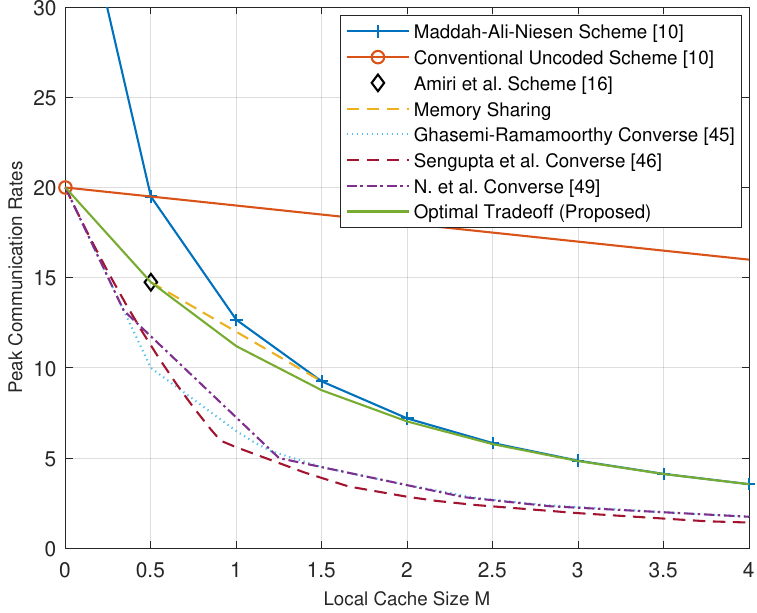}
  \caption{Peak rates for $N=20$, $K=40$. For this scenario, the best communication rate stated in prior works is achieved by the memory-sharing among the conventional uncoded scheme \cite{maddah-ali12a}, the Maddah-Ali-Niesen scheme \cite{maddah-ali12a}, and the Amiri et al. scheme \cite{amiri2016coded}. The tightest prior converse bound in this scenario was provided by \cite{ghasemi15,sengupta15,prem2015critical}. }
\end{subfigure}
\vspace{3mm}
\caption{Numerical comparison between the optimal tradeoff and the state of the arts for the centralized setting. Our results strictly improve the prior arts in both achievability and converse, for both average rate and peak rate.}
\label{fig:cent_compare}
\end{figure}

\begin{remark}
  We numerically compare our results with the state-of-the-art schemes and the converses for the centralized setting. As shown in Fig. \ref{fig:cent_compare}, both the achievability scheme and the converse provided in our paper strictly improve the prior arts, for both average rate and peak rate. Similar results can be shown for the decentralized setting, and a numerical comparison is provided in Section \ref{sec:ext}.

\end{remark}

\begin{remark}
  There have also been several prior works considering caching designs with coded prefetching \cite{maddah-ali12a,DBLP:journals/corr/Chen14h,  sahraei2016k, tian2016caching, amiri2016fundamental}. They focused on the centralized setting and showed that the peak communication rate achieved by uncoded prefetching schemes can be improved in  some low capacity regimes. Even taking coded prefetching schemes into account, our work strictly improves the prior art in most cases (see Section \ref{sec:conclu} for numerical results). 
  More importantly, the caching schemes developed in this paper is within a factor of $2$ optimal in the general coded prefetching setting, for both average and peak rates, centralized and decentralized settings \cite{yu2017characterizing}.

\end{remark}

	In the following sections, we prove Theorem \ref{th:p} by first describing a caching scheme that achieves the minimum average rate (see Section \ref{sec:opt}), and then deriving tight lower bounds of the expected rates for any uncoded prefetching scheme (see Section \ref{sec:conv}).

       	\section{The Optimal Caching Scheme}\label{sec:opt}
	
		In this section, we provide a caching scheme (i.e. a prefetching scheme and a delivery scheme) to achieve $R^*$ stated in Theorem \ref{th:p}.	Before introducing the proposed caching scheme, we demonstrate the main ideas of the proposed scheme through a motivating example.

	\subsection{Motivating Example}\label{sec:example}
	Consider a caching system with $3$ files (denoted by $A$, $B$, and $C$), $6$ users, and a caching size of $1$ file for each user. 	
	To develop a caching scheme, we need to design an uncoded prefetching scheme, independent of the demands, and develop delivery strategies for each of the possible $3^6$ demands.
	
	For the prefetching strategy, we break file $A$ into $15$ subfiles of equal size, and denote their values by $A_{\{1,2\}}$, $A_{\{1,3\}}$, $A_{\{1,4\}}$, $A_{\{1,5\}}$, $A_{\{1,6\}}$, $A_{\{2,3\}}$, $A_{\{2,4\}}$, $A_{\{2,5\}}$, $A_{\{2,6\}}$, $A_{\{3,4\}}$, $A_{\{3,5\}}$, $A_{\{3,6\}}$, $A_{\{4,5\}}$, $A_{\{4,6\}}$, and $A_{\{5,6\}}$. Each user $k$ caches the subfiles whose index includes $k$, e.g., user $1$ caches $A_{\{1,2\}}$, $A_{\{1,3\}}$, $A_{\{1,4\}}$, $A_{\{1,5\}}$, and $A_{\{1,6\}}$. The same goes for files $B$ and $C$. This prefetching scheme was originally proposed in \cite{maddah-ali12a}. 
	
	Given the above prefetching scheme, we now need to develop an optimal delivery strategy for each of the possible demands. In this subsection, we demonstrate the key idea of our proposed delivery scheme through a representative demand scenario, namely, each file is requested by 2 users as shown in Figure \ref{fig:example}.
	
		\begin{figure}[htbp]
			\centering
			\includegraphics[width=0.45\textwidth]{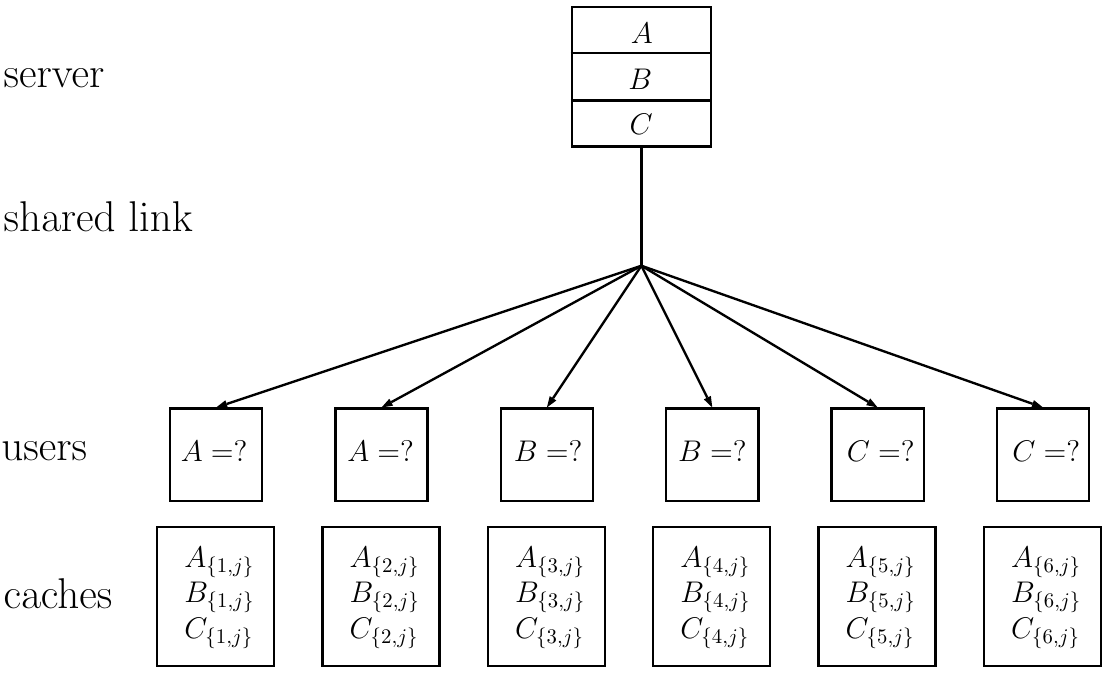}
			\caption{A caching system with $6$ users, $3$ files, local cache size of $1$ file at each user, and a demand where each file is requested by $2$ users.}
			\label{fig:example}
		\end{figure} 
	
	We first consider a subset of 3 users $\{1,2,3\}$. User $1$ requires subfile $A_{\{2,3\}}$, which is only available at users $2$ and $3$. User $2$ requires subfile $A_{\{1,3\}}$, which is only available at users $1$ and $3$. User $3$ requires subfile $B_{\{1,2\}}$, which is only available at users $1$ and $2$. In other words, the three users would like to exchange subfiles $A_{\{2,3\}}$, $A_{\{1,3\}}$, and $B_{\{1,2\}}$, which can be enabled by transmitting the message $A_{\{2,3\}}\oplus A_{\{1,3\}}\oplus B_{\{1,2\}}$ over the shared link.

	Similarly, we can create and broadcast messages for any subset $\mathcal{A}$ of $3$ users that exchange $3$ subfiles among those $3$ users. As a short hand notation, we denote the corresponding message by $Y_\mathcal{A}$. According to the delivery scheme proposed in \cite{maddah-ali12a}, if we broadcast all $\binom{6}{3}=20$ messages that could be created in this way, all users will be able to decode their requested files.
	
	However, in this paper we propose a delivery scheme where, instead of broadcasting all those $20$ messages, only $19$ of them are computed and broadcasted, omitting the message $Y_{\{2,4,6\}}$.  
	Specifically, we broadcast the following $19$ values:
	\begin{align*}
	Y_{\{1,2,3\}}=B_{\{1,2\}}\oplus A_{\{1,3\}} \oplus A_{\{2,3\}} \\ Y_{\{1,2,4\}}=B_{\{1,2\}}\oplus A_{\{1,4\}} \oplus A_{\{2,4\}}\\
	Y_{\{1,2,5\}}=C_{\{1,2\}}\oplus A_{\{1,5\}} \oplus A_{\{2,5\}} \\
	Y_{\{1,2,6\}}=C_{\{1,2\}}\oplus A_{\{1,6\}} \oplus A_{\{2,6\}}\\
	Y_{\{1,3,4\}}=B_{\{1,3\}}\oplus B_{\{1,4\}} \oplus A_{\{3,4\}}\\
	Y_{\{1,3,5\}}=C_{\{1,3\}}\oplus B_{\{1,5\}} \oplus A_{\{3,5\}}\\
	Y_{\{1,3,6\}}=C_{\{1,3\}}\oplus B_{\{1,6\}} \oplus A_{\{3,6\}}\\
	Y_{\{1,4,5\}}=C_{\{1,4\}}\oplus B_{\{1,5\}} \oplus A_{\{4,5\}}\\
	Y_{\{1,4,6\}}=C_{\{1,4\}}\oplus B_{\{1,6\}} \oplus A_{\{4,6\}}\\
	Y_{\{1,5,6\}}=C_{\{1,5\}}\oplus C_{\{1,6\}} \oplus A_{\{5,6\}}\\
	Y_{\{2,3,4\}}=B_{\{2,3\}}\oplus B_{\{2,4\}} \oplus A_{\{3,4\}}\\
	Y_{\{2,3,5\}}=C_{\{2,3\}}\oplus B_{\{2,5\}} \oplus A_{\{3,5\}}\\
	Y_{\{2,3,6\}}=C_{\{2,3\}}\oplus B_{\{2,6\}} \oplus A_{\{3,6\}}\\
	Y_{\{2,4,5\}}=C_{\{2,4\}}\oplus B_{\{2,5\}} \oplus A_{\{4,5\}}\\
	Y_{\{2,5,6\}}=C_{\{2,5\}}\oplus C_{\{2,6\}} \oplus A_{\{5,6\}}\\
	Y_{\{3,4,5\}}=C_{\{3,4\}}\oplus B_{\{3,5\}} \oplus B_{\{4,5\}}\\
	Y_{\{3,4,6\}}=C_{\{3,4\}}\oplus B_{\{3,6\}} \oplus B_{\{4,6\}}\\
	Y_{\{3,5,6\}}=C_{\{3,5\}}\oplus C_{\{3,6\}} \oplus B_{\{5,6\}}\\
	Y_{\{4,5,6\}}=C_{\{4,5\}}\oplus C_{\{4,6\}} \oplus B_{\{5,6\}}
	\end{align*}

	Surprisingly, even after taking out the extra message, all users are still able to decode the requested files. The reason is as follows:
	
	User $1$ is able to decode file $A$, because every subfile $A_{\{i,j\}}$ that is not cached by user $1$ can be computed with the help of $Y_{\{1,i,j\}}$, which is directly broadcasted. 
	The above is the same decoding procedure used in \cite{maddah-ali12a}. 	User $2$ can easily decode all subfiles in $A$ except $A_{\{4,6\}}$ in a similar way, although decoding $A_{\{4,6\}}$ is more challenging since the value $Y_{\{2,4,6\}}$, which is needed in the above decoding procedure for decoding $A_{\{4,6\}}$, is not directly broadcasted. However, user $2$ can still decode $A_{\{4,6\}}$ by adding $Y_{\{1,4,6\}}$, $Y_{\{1,4,5\}}$, $Y_{\{1,3,6\}}$, and $Y_{\{1,3,5\}}$, which gives the binary sum of $A_{\{4,6\}}$, $A_{\{4,5\}}$, $A_{\{3,6\}}$, and $A_{\{3,5\}}$. Because $A_{\{4,5\}}$, $A_{\{3,6\}}$, and $A_{\{3,5\}}$ are easily decodable, $A_{\{4,6\}}$ can be obtained consequently.

	Due to symmetry, all other users can decode their requested files in the same manner. This completes the decoding tasks for the given demand.

	\subsection{General Schemes}
	
	Now we present a general caching scheme that achieves the rate $R^*$ stated in Theorem \ref{th:p}. We focus on presenting prefetching schemes and delivery schemes when $t\in\{0,1,...,K\}$, since for general $t$, the minimum rate $R^*$ can be achieved by memory sharing.

	\begin{remark}\label{remark:convexity}
		Note that the rates stated in equation (\ref{eq:unif_pre}) for $t\in\{0,1,...,K\}$ form a convex sequence, which are consequently on their lower convex envelope. Thus those rates cannot be further improved using memory sharing.
	\end{remark}
	
	To prove the achievability of $R^*$, we need to provide an optimal prefetching scheme $\boldsymbol{\mathcal{M}}$, an optimal delivery scheme for every possible user demand $\boldsymbol{d}$ of which the average rate achieves $R^*$, and a valid decoding algorithm for the users. The main idea of our proposed achievability scheme is to first design a prefetching scheme that enables multicast coding opportunities, and then in the delivery phase, we optimally deliver the message by effectively solving an index coding problem. 
	
	We consider the following optimal prefetching:	We partition each file $i$ into $\binom{K}{t}$ non-overlapping subfiles with approximately equal size. We assign the $\binom{K}{t}$ subfiles to $\binom{K}{t}$ different subsets of $\{1,..,K\}$ of size $t$, and denote the value of the subfile assigned to subset $\mathcal{A}$ by $W_{i,\mathcal{A}}$. 
		Given this partition, each user $k$ caches all bits in all subfiles $W_{i,\mathcal{A}}$ such that $k\in\mathcal{A}$. Because each user caches $\binom{K-1}{t-1}N$ subfiles, and each subfile has $F/\binom{K}{t}$ bits, the caching load of each user equals $NtF/K=MF$ bits, which satisfies the memory constraint. This prefetching was originally proposed in \cite{maddah-ali12a}. In the rest of the paper, we refer to this prefetching as \textit{symmetric batch prefetching}.
			
			\
		Given this prefetching (denoted by $\boldsymbol{\mathcal{M_{\textup{batch}}}}$), our goal is to show that for any demand $\boldsymbol{d}$, we can find a delivery scheme that achieves the following optimal rate with zero error probability: \footnote{Rigorously, we prove equation (\ref{eq:single}) for  $F|\binom{K}{t}$. In other cases, the resulting extra communication overhead is negligible for large $F$.}
		 	\begin{align}\label{eq:single}
		 	R^*_{\epsilon=0}\left(\boldsymbol{d},\boldsymbol{\mathcal{M_{\textup{batch}}}}\right)=\frac{\binom{K}{t+1}-\binom{K-N_{\textup{e}}(\boldsymbol{d})}{t+1}}{\binom{K}{t}}.
		 	\end{align}
		Hence, by taking the expectation over demand $\boldsymbol{d}$, the rate $R^*$ stated in Theorem \ref{th:p} can be achieved. 
	
	\begin{remark}
			Note that, in the special case where all users are requesting different files (i.e., $N_{\textup{e}}(\boldsymbol{d})=K$), the above rate equals $\frac{K-t}{t+1}$, which can already be achieved by the delivery scheme proposed in \cite{maddah-ali12a}. Our proposed scheme aims to achieve this optimal rate in more general circumstances, when some users may share common demands.
	\end{remark}
	
		\begin{remark}
			Finding the minimum communication load given a prefetching $\boldsymbol{\mathcal{M}}$ can be viewed as a special case of the index coding problem. Theorem \ref{th:p} indicates the optimality of the delivery scheme given the symmetric batch prefetching, which implies that (\ref{eq:single}) gives the solution to a special class of non-symmetric index coding problem. 
		\end{remark}
	
	The optimal delivery scheme is designed as follows:
	For each demand $\boldsymbol{d}$, recall that $N_{\textup{e}}(\boldsymbol{d})$ denotes the number of distinct files requested by all users. The server arbitrarily selects a subset of $N_{\textup{e}}(\boldsymbol{d})$ users, denoted by $\mathcal{U}=\{u_1,...,u_{N_{\textup{e}}(\boldsymbol{d})}\}$, that request $N_{\textup{e}}(\boldsymbol{d})$ different files. We refer to these users as \textit{leaders}. 
	
	Given an arbitrary subset $\mathcal{A}$ of $t+1$ users, each user $k\in \mathcal{A}$ needs the subfile $W_{d_k, \mathcal{A}\backslash\{k\}}$, which is known by all other users in $\mathcal{A}$.
	In other words, all users in set $\mathcal{A}$ would like to exchange subfiles $W_{d_k, \mathcal{A}\backslash\{k\}}$ for all $k\in \mathcal{A}$. This exchange can be processed if the binary sum of all those files, i.e. $\underset{x\in \mathcal{A}}{\mathlarger{\oplus}}W_{d_x, \mathcal{A}\backslash\{x\}}$, is available from the broadcasted message.  To simplify the description of the delivery scheme, for each subset $\mathcal{A}$ of users, we define the following short hand notation
	\begin{align}
	Y_\mathcal{A}=\underset{x\in \mathcal{A}}{\mathlarger{\oplus}}W_{d_x, \mathcal{A}\backslash\{x\}}.\label{eq:ya}
	\end{align}
	
	To achieve the rate stated in (\ref{eq:single}), the
	server only greedily broadcasts the binary sums that directly help at least $1$ leader. Rigorously, the server computes and broadcasts all $Y_\mathcal{A}$ for all subsets $\mathcal{A}$ of size $t+1$ that satisfy $\mathcal{A}\cap\mathcal{U}\neq\emptyset$. The length of the message equals $\binom{K}{t+1}-\binom{K-N_{\textup{e}}(\boldsymbol{d})}{t+1}$ times the size of a subfile, which matches the stated rate.
	
	We now prove that each user who requests a file is able to decode the requested file upon receiving the messages.	For any leader $k\in \mathcal{U}$ and any subfile $W_{d_k, \mathcal{A}}$ that is requested but not cached by user $k$, the message $Y_{\{k\}\cup \mathcal{A}}$ is directly available from the broadcast. Thus, $k$ is able to obtain all requested subfiles by decoding each subfile  $W_{d_k, \mathcal{A}}$ from message $Y_{\{k\}\cup \mathcal{A}}$ using the following equation: 
			\begin{align}
	W_{d_k, \mathcal{A}}=  Y_{\{k\}\cup \mathcal{A}} \  \mathlarger{\oplus} \left(\underset{x\in \mathcal{A}}{\mathlarger{\oplus}}W_{d_x, \{k\}\cup \mathcal{A}\backslash\{x\}}\right),
	\end{align}
	which directly follows from equation (\ref{eq:ya}).
	
	The decoding procedure for a non-leader user $k$ is less straightforward, because not all messages $Y_{\{k\}\cup\mathcal{A}}$ for corresponding required subfiles $W_{d_k, \mathcal{A}}$ are directly broadcasted. However, user $k$ can generate these messages simply based on the received messages, and can thus decode all required subfiles. We prove the above fact as follows.

	First we prove the following simple lemma:
	
	\begin{lemma}
		\label{dec_l}
		Given a demand $\boldsymbol{d}$, and a set of leaders $\mathcal{U}$. For any subset $\mathcal{B}\subseteq\{1,...,K\}$ that includes $\mathcal{U}$, let $\mathcal{V}_{\textup{F}}$ be the family of all subsets $\mathcal{V}$ of $\mathcal{B}$ such that each requested file in $\boldsymbol{d}$ is requested by exactly one user in $\mathcal{V}$.
		
		The following equation holds: 
		\begin{align}
		\label{dec_lemma}
		\underset{\mathcal{V}\in\mathcal{V}_{\textup{F}}}{\mathlarger{\oplus}} Y_{\mathcal{B}\backslash\mathcal{V}}=0
		\end{align}
		if each $Y_{\mathcal{B}\backslash\mathcal{V}}$ is defined in (\ref{eq:ya}).
		
	\end{lemma}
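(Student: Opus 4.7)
The plan is to expand the left-hand side of (\ref{dec_lemma}) using the definition of $Y_\mathcal{A}$ in (\ref{eq:ya}), swap the order of summation, and then count, for every subfile $W_{f,\mathcal{S}}$, how many times it appears in the resulting bitwise XOR. If every subfile appears an even number of times, the sum vanishes in $\mathbb{F}_2$ and the lemma follows.

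First I would rewrite
\begin{align*}
\bigoplus_{\mathcal{V}\in\mathcal{V}_{\textup{F}}} Y_{\mathcal{B}\setminus\mathcal{V}} = \bigoplus_{\mathcal{V}\in\mathcal{V}_{\textup{F}}}\ \bigoplus_{x\in\mathcal{B}\setminus\mathcal{V}} W_{d_x,(\mathcal{B}\setminus\mathcal{V})\setminus\{x\}},
\end{align*}
and observe that a given subfile $W_{f,\mathcal{S}}$ is contributed by each pair $(\mathcal{V},x)$ satisfying $x\in\mathcal{B}\setminus\mathcal{S}$, $d_x=f$, and $\mathcal{V}=\mathcal{B}\setminus(\mathcal{S}\cup\{x\})\in\mathcal{V}_{\textup{F}}$. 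I would then fix $(f,\mathcal{S})$ and set $\mathcal{Y}:=\{y\in\mathcal{B}\setminus\mathcal{S}:d_y=f\}$, so that the candidate choices of $x$ range exactly over $\mathcal{Y}$.

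The heart of the argument is to analyze when $\mathcal{V}(x):=\mathcal{B}\setminus(\mathcal{S}\cup\{x\})$ belongs to $\mathcal{V}_{\textup{F}}$. For any requested file $g\neq f$, removing $x$ does not alter the set of users in $\mathcal{B}\setminus\mathcal{S}$ requesting $g$ (since $d_x=f\neq g$), so the condition that exactly one user in $\mathcal{V}(x)$ requests $g$ is independent of $x$. For file $f$ itself, the users in $\mathcal{V}(x)$ requesting $f$ form $\mathcal{Y}\setminus\{x\}$, whose size $|\mathcal{Y}|-1$ is also independent of $x$. Hence $\mathcal{V}(x)\in\mathcal{V}_{\textup{F}}$ either holds for all $x\in\mathcal{Y}$ or for none; and when it holds, the leader condition for file $f$ forces $|\mathcal{Y}|-1=1$, i.e., $|\mathcal{Y}|=2$.

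Combining these observations, the subfile $W_{f,\mathcal{S}}$ appears either $0$ or exactly $|\mathcal{Y}|=2$ times in the expansion, which is always even, so the XOR collapses to $0$. I expect the main obstacle to be articulating the decoupling just described — namely that the membership condition $\mathcal{V}\in\mathcal{V}_{\textup{F}}$ splits into a constraint depending only on $(f,\mathcal{S})$ together with constraints on the remaining requested files that do not involve $x$ — since once this decoupling is in place, the parity count falls out immediately and the lemma is proved.
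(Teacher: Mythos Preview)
Your proposal is correct, and it shares the paper's core idea: expand the XOR via (\ref{eq:ya}) and show every subfile is hit an even number of times. The organization differs, however. The paper first partitions $\mathcal{B}$ into blocks $\mathcal{B}_u=\{x\in\mathcal{B}:d_x=d_u\}$, reparametrizes $\mathcal{V}_{\textup{F}}$ for each $u\in\mathcal{U}$ as $\{\{y\}\cup\mathcal{V}':y\in\mathcal{B}_u,\ \mathcal{V}'\in\mathcal{V}_u\}$, and then observes that the resulting sum runs over ordered pairs $(x,y)\in\mathcal{B}_u^2$ with $x\neq y$, where swapping $x\leftrightarrow y$ leaves the contributed subfile $W_{d_u,\mathcal{B}\setminus(\mathcal{V}'\cup\{x,y\})}$ unchanged. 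You instead fix the subfile $W_{f,\mathcal{S}}$ up front and count contributing pairs $(\mathcal{V},x)$ directly; your set $\mathcal{Y}$ is exactly the paper's $\{x,y\}$, so your conclusion $|\mathcal{Y}|=2$ is the same $(x,y)\leftrightarrow(y,x)$ symmetry seen from the other side. Your route is a bit more elementary since it avoids the reparametrization of $\mathcal{V}_{\textup{F}}$, while the paper's route makes the pairing (the involution) explicit rather than implicit.
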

	\begin{proof}
		To prove Lemma \ref{dec_l}, we essentially need to show that, after expanding the LHS of equation (\ref{dec_lemma}) into a binary sum of subfiles using the definition in (\ref{eq:ya}), each subfile is counted an even number of times. This will ensure that the net sum is equal to $0$. To rigorously prove this fact, we start by defining the following.

		For each $u\in \mathcal{U}$ we define $\mathcal{B}_u$ as 
		\begin{align}
		\mathcal{B}_u=\{x\in \mathcal{B}\ |\ d_x=d_u\}.
		\end{align}

		Then all sets $\mathcal{B}_u$ disjointly cover the set $ \mathcal{B}$, and the following equations hold:
		
		\begin{align}
		\underset{\mathcal{V}\in\mathcal{V}_{\textup{F}}}{\mathlarger{\oplus}} Y_{ \mathcal{B}\backslash\mathcal{V}}&=\underset{\mathcal{V}\in\mathcal{V}_{\textup{F}}}{\mathlarger{\oplus}} \ \underset{x\in  \mathcal{B}\backslash\mathcal{V}}{\mathlarger{\oplus}}W_{d_x,  \mathcal{B}\backslash(\mathcal{V} \cup \{x\})}\\
		&=\underset{u\in\mathcal{U}}{\mathlarger{\oplus}} \ \underset{\mathcal{V}\in\mathcal{V}_{\textup{F}}}{\mathlarger{\oplus}} \ \underset{x\in ( \mathcal{B}\backslash\mathcal{V})\cap \mathcal{B}_u}{\mathlarger{\oplus}}W_{d_u,  \mathcal{B}\backslash(\mathcal{V} \cup \{x\})}\\
		&=\underset{u\in\mathcal{U}}{\mathlarger{\oplus}} \ \underset{\mathcal{V}\in\mathcal{V}_{\textup{F}}}{\mathlarger{\oplus}} \ \underset{x\in \mathcal{B}_u\backslash\mathcal{V}}{\mathlarger{\oplus}}W_{d_u, \mathcal{B}\backslash(\mathcal{V} \cup \{x\})}.
		\end{align}
		
		For each $u\in\mathcal{U}$, we let $\mathcal{V}_u$ be the family of  all subsets $\mathcal{V}'$ of $\mathcal{B}\backslash \mathcal{B}_u$ such that each requested file in $\boldsymbol{d}$, except $d_u$, is requested by exactly one user in $\mathcal{V}'$.	Then $\mathcal{V}_{\textup{F}}$ can be represented as follows:
		\begin{align}
		\mathcal{V}_{\textup{F}}=\{ \{y\}\cup \mathcal{V}'\ |\  y\in\mathcal{B}_u , \mathcal{V}'\in\mathcal{V}_u\}. 
		\end{align}
		
		Consequently, the following equation holds for each $u\in\mathcal{U}$:
		\begin{align}
		\underset{\mathcal{V}\in\mathcal{V}_{\textup{F}}}{\mathlarger{\oplus}} \ \underset{x\in \mathcal{B}_u\backslash\mathcal{V}}{\mathlarger{\oplus}}&W_{d_u, \mathcal{B}\backslash(\mathcal{V} \cup \{x\})}\nonumber\\&=\underset{\mathcal{V}'\in\mathcal{V}_u}{\mathlarger{\oplus}}  
		\  \underset{y\in\mathcal{B}_u}{\mathlarger{\oplus}} \ \underset{x\in \mathcal{B}_u \backslash\{y\} }{\mathlarger{\oplus}}W_{d_u, \mathcal{B}\backslash(\mathcal{V}' \cup \{x,y\})}\\
		&=\underset{\mathcal{V}'\in\mathcal{V}_u}{\mathlarger{\oplus}}  
		\  \underset{\substack{(x,y)\in\mathcal{B}^2_u\\x\neq y}}{\mathlarger{\oplus}} W_{d_u, \mathcal{B}\backslash(\mathcal{V}' \cup \{x,y\})}
		\end{align}
		Note that $W_{d_u, \mathcal{B}\backslash(\mathcal{V}' \cup \{x,y\})}$ and $W_{d_u, \mathcal{B}\backslash(\mathcal{V}' \cup \{y,x\})}$ are the same subfile. Hence, every single subfile in the above equation is counted exactly twice, which sum up to $0$.
		Consequently, the LHS of equation (\ref{dec_lemma}) also equals 0.
	\end{proof}
	
	Consider any subset $\mathcal{A}$ of $t+1$ non-leader users. From Lemma \ref{dec_l}, the message $Y_\mathcal{A}$ can be directly computed from the broadcasted messages using the following equation: 
	\begin{align}
	Y_\mathcal{A}=\underset{\mathcal{V}\in\mathcal{V}_{\textup{F}}\backslash \{\mathcal{U}\}}{\mathlarger{\oplus}} Y_{\mathcal{B}\backslash\mathcal{V}},
	\end{align}
	where $\mathcal{B}=\mathcal{A}\cup \mathcal{U}$, 
	given the fact that all messages on the RHS of the above equation are broadcasted, because each $\mathcal{B}\backslash\mathcal{V}$ has a size of $t+1$ and contains at least one leader.	Hence, each user $k$ can obtain the value $Y_\mathcal{A}$ for any subset $\mathcal{A}$ of $t+1$ users, and can subsequently decode its requested file as previously discussed.

\begin{remark}
	An interesting open problem is to find computationally efficient decoding algorithms for the proposed optimal caching scheme. 
	The decoding algorithm proposed in this paper imposes extra computation
	at the non-leader users, since they have to solve for the missing messages to recover all needed subfiles. However there are some ideas that one may explore to improve this decoding strategy, e.g. designing a smarter approach for non-leader users instead of naively recovering all required messages before decoding the subfiles (see the decoding approach provided in the motivating example in Section \ref{sec:example}). 
	\end{remark}

	\section{Converse}\label{sec:conv}

	In this section, we derive a tight lower bound on the minimum expected rate $R^*$, which shows the optimality of the caching scheme proposed in this paper. 
	To derive the corresponding lower bound on the average rate over all demands, we divide the set $\mathcal{D}$ into smaller subsets, and lower bound the average rates within each subset individually. We refer to these smaller subsets as \textit{types}, which are defined as follows.\footnote{The notion of type was also recently introduced in \cite{chao2016isit} in order to simplify the LP for finding better converse bounds for the coded caching problem.}

	 	\begin{figure}[htbp]
	 		\centering
	 		\includegraphics[width=0.35\textwidth]{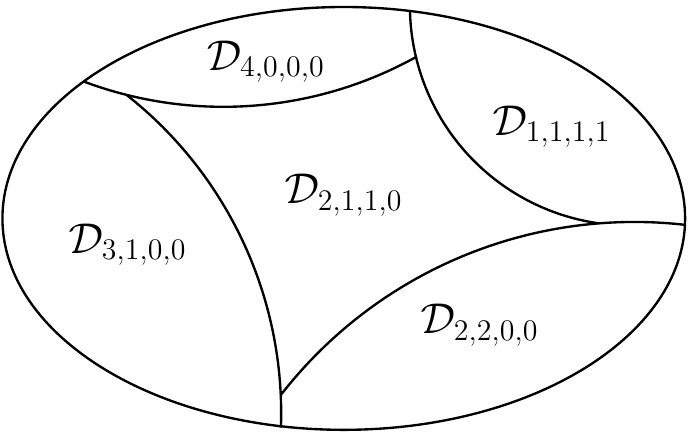}
	 		\caption{Dividing $\mathcal{D}$ into 5 types, for a caching problem with $4$ files and $4$ users.}
	 		\label{fig:types}
	 	\end{figure}
	 Given an arbitrary demand $\boldsymbol{d}$, we define its \textit{statistics}, denoted by  $\boldsymbol{s}(\boldsymbol{d})$, as a sorted array of length $N$, such that $s_i(\boldsymbol{d})$ equals the number of users that request the $i$th most requested file. We denote the set of all possible statistics by $\mathcal{S}$. 
	 Grouping by the same statistics, the set of all demands $\mathcal{D}$ can be broken into many small subsets.	For any statistics $\boldsymbol{s}\in\mathcal{S}$, we define type $\mathcal{D}_{\boldsymbol{s}}$ as the set of queries with statistics $\boldsymbol{s}$. 
	 
	 For example, consider a caching problem with $4$ files (denoted by $A$, $B$, $C$, and $D$) and $4$ users. The statistics of the demand $\boldsymbol{d}=(A,A,B,C)$ equals $\boldsymbol{s}(\boldsymbol{d})=(2,1,1,0)$. More generally, the set of all possible statistics for this problem is $\mathcal{S}=\{(4,0,0,0),(3,1,0,0),(2,2,0,0),(2,1,1,0),(1,1,1,1)\}$, and $\mathcal{D}$ can be divided into $5$ types accordingly, as shown in Fig. \ref{fig:types}.
	 
	 Note that for each demand  $\boldsymbol{d}$, the value $N_{\textup{e}}(\boldsymbol{d})$ only depends on its statistics $\boldsymbol{s}({\boldsymbol{d}})$, and thus the value is identical across all demands in $\mathcal{D}_{\boldsymbol{s}}$. For convenience, we denote that value by $N_{\textup{e}}(\boldsymbol{s})$.
	 
	 Given a prefetching $\boldsymbol{\mathcal{M}}$, we denote the average rate within each type $\mathcal{D}_{\boldsymbol{s}}$ by $R^*_{\epsilon}(\boldsymbol{s}, \boldsymbol{\mathcal{M}})$. Rigorously, 
	 \begin{align}
	 R^*_\epsilon({\boldsymbol{s}}, \boldsymbol{\mathcal{M}})&=\frac{1}{|\mathcal{D}_{\boldsymbol{s}}|}\sum_{\boldsymbol{d}\in \mathcal{D}_{\boldsymbol{s}}}R^*_\epsilon(\boldsymbol{d},\boldsymbol{\mathcal{M}}).
	 \end{align}
	 Recall that all demands are equally likely, so we have
	  \begin{align}
	  R^*&=\sup_{\epsilon>0} \adjustlimits\limsup_{F\rightarrow+\infty}\min_{{\boldsymbol{\mathcal{M}}}} \mathbb{E}_{\boldsymbol{s}}[R_\epsilon^*({\boldsymbol{s}}, \boldsymbol{\mathcal{M}})]\\
	  &\geq \sup_{\epsilon>0} \limsup_{F\rightarrow+\infty}\mathbb{E}_{\boldsymbol{s}}[\min_{{\boldsymbol{\mathcal{M}}}} R_\epsilon^*({\boldsymbol{s}}, \boldsymbol{\mathcal{M}})].\label{ineq:17}
	  \end{align}
	  Hence, in order to lower bound $R^*$,  
	  it is sufficient to bound the minimum value of $R^*_\epsilon({\boldsymbol{s}}, \boldsymbol{\mathcal{M}})$ for each type $\mathcal{D}_{\boldsymbol{s}}$ individually.
	  We show that, when the prefetching is uncoded, the minimum average rate within a type can be tightly bounded (when $F$ is large and $\epsilon$ is small), thus the rate-memory tradeoff can be completely characterized using this technique.
	  
	  The lower bounds of the minimum average rates within each type are presented in the following lemma:
	 \begin{lemma}\label{lemma:univ}
	 	Consider a caching problem with $N$ files, $K$ users, and a local cache size of $M$ files for each user. For any type $\mathcal{D}_{\boldsymbol{s}}$, the minimum value of $R_\epsilon^*(\boldsymbol{s}, \boldsymbol{\mathcal{M}})$ is lower bounded by
	 	\begin{align}
	 	\min_{\boldsymbol{\mathcal{M}}}	
	 	R^*_\epsilon(\boldsymbol{s} ,\boldsymbol{\mathcal{M}})\geq &\textup{Conv}\left( \frac{\binom{K}{t+1}-\binom{K-N_{\textup{e}}(\boldsymbol{s})}{t+1}}{\binom{K}{t}}\right)\nonumber\\& -\left(\frac{1}{F}+ N^2_{\textup{e}}(\boldsymbol{s})\epsilon  \right),
	 	 	\end{align}
	 	where $\textup{Conv}(f(t))$ denotes the lower convex envelope of the following points: $\{(t,f(t))\ | \ t\in\{0,1,...,K\}\}$.
	 \end{lemma}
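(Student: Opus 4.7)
The plan is to reduce the bound to a small linear program over aggregate caching profiles. First, because the prefetching is uncoded, each file $W_n$ decomposes canonically into subfiles $W_{n,\mathcal{A}}$ indexed by the subset $\mathcal{A}\subseteq\{1,\ldots,K\}$ of users caching those bits; let $x_{n,\mathcal{A}}=|W_{n,\mathcal{A}}|/F$ and aggregate into the profile $a_j=\frac{1}{N}\sum_{n}\sum_{|\mathcal{A}|=j}x_{n,\mathcal{A}}$, the total normalized volume cached by exactly $j$-user subsets. Partitioning each file gives $\sum_j a_j=1$, and summing the per-user memory constraint over all users yields $\sum_j j\,a_j\leq t$. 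The key claim to establish is that $R^*_\epsilon(\boldsymbol{s},\boldsymbol{\mathcal{M}})$ is lower bounded by a linear functional of $(a_j)$ which, after optimizing over this polytope, equals the stated convex envelope.

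Second, for a fixed demand $\boldsymbol{d}\in\mathcal{D}_{\boldsymbol{s}}$ and any leader ordering $(u_1,\ldots,u_{N_{\textup{e}}(\boldsymbol{s})})$, I would establish the per-demand bound
\begin{align}
R^*_\epsilon(\boldsymbol{d},\boldsymbol{\mathcal{M}})\geq \sum_{i=1}^{N_{\textup{e}}(\boldsymbol{s})}\,\sum_{\mathcal{A}:\,\mathcal{A}\cap\{u_1,\ldots,u_i\}=\emptyset}x_{d_{u_i},\mathcal{A}}-\Bigl(\tfrac{1}{F}+N_{\textup{e}}(\boldsymbol{s})\,\epsilon\Bigr)\nonumber
\end{align}
by a chain-rule expansion of the joint entropy of $X$ and the leader caches, combined with Fano's inequality applied to each leader's decoder; intuitively, conditioned on $\mathcal{M}_{u_1},\ldots,\mathcal{M}_{u_{i-1}}$ and the previously decoded files $W_{d_{u_1}},\ldots,W_{d_{u_{i-1}}}$, the remaining novelty of $W_{d_{u_i}}$ lives precisely in the subfiles whose cache pattern avoids $\{u_1,\ldots,u_i\}$. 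I would then average this inequality uniformly over all demands in $\mathcal{D}_{\boldsymbol{s}}$ and over a uniformly random leader ordering. By the two layers of symmetry (over which files play which role in $\boldsymbol{s}$, and over which users serve as leaders), every subfile $W_{n,\mathcal{A}}$ contributes with a coefficient depending only on $j=|\mathcal{A}|$, $N_{\textup{e}}(\boldsymbol{s})$ and $K$; a combinatorial calculation identifies this coefficient as $f(j)$, where $f(t)=\frac{\binom{K}{t+1}-\binom{K-N_{\textup{e}}(\boldsymbol{s})}{t+1}}{\binom{K}{t}}$, yielding the linear bound $R^*_\epsilon(\boldsymbol{s},\boldsymbol{\mathcal{M}})\geq \sum_j f(j)\,a_j-\bigl(\tfrac{1}{F}+N_{\textup{e}}^2(\boldsymbol{s})\,\epsilon\bigr)$.

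Finally, I would minimize $\sum_j f(j)\,a_j$ over the polytope $\{a\geq 0:\sum_j a_j=1,\ \sum_j j\,a_j\leq t\}$. Since $f$ is convex and decreasing over $\{0,1,\ldots,K\}$ (cf.\ Remark \ref{remark:convexity}), Jensen's inequality applied to its convex extension gives $\sum_j f(j)\,a_j\geq \textup{Conv}(f)(\sum_j j\,a_j)\geq \textup{Conv}(f)(t)$, with equality attained on distributions supported on the two consecutive integers bracketing $t$. Combining the three steps produces the stated bound. I expect the main obstacle to be the combinatorial bookkeeping in the averaging step, namely rigorously verifying that the symmetrized coefficient of each $x_{n,\mathcal{A}}$ reduces to the correct function of $|\mathcal{A}|$ alone and matches $f(j)$ exactly; this exact matching is precisely what makes the symmetric batch prefetching universally optimal for every type, since it places $a_j=1$ at $j=t$ and thereby saturates the linear bound. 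The accounting of the error term (where $N_{\textup{e}}^2(\boldsymbol{s})\,\epsilon$ absorbs Fano errors summed across $N_{\textup{e}}(\boldsymbol{s})$ leaders in each of the $N_{\textup{e}}(\boldsymbol{s})$ chain-rule terms) is then routine.
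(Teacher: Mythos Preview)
Your proposal is correct and follows essentially the same approach as the paper's proof. The paper phrases the per-demand bound via a genie-aided virtual user (who is handed the relevant portions of $\mathcal{M}_{u_1},\ldots,\mathcal{M}_{u_{N_{\textup{e}}}}$ incrementally and decodes all $N_{\textup{e}}$ files, then applies a single Fano inequality yielding the $N_{\textup{e}}^2\epsilon$ term directly), whereas you phrase it as a chain-rule expansion; the paper then averages over all permutation pairs $(p_1,p_2)\in\mathcal{P}_K\times\mathcal{P}_N$, which is exactly your symmetrization over demands in the type together with leader orderings, and concludes with the same Jensen-plus-monotonicity argument on the profile $(a_j)$.
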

	 
	 \begin{remark}[Universal Optimality of Symmetric Batch Prefetching]
		 	The above lemma characterizes the minimum average rate given a type $\mathcal{D}_{\boldsymbol{s}}$, if the prefetching $\boldsymbol{\mathcal{M}}$ can be designed based on $s$. However,
		 	for (\ref{ineq:17}) to be tight, the average rate for each different type has to be minimized on the same prefetching. Surprisingly, such an optimal prefetching exists, an example being the symmetric batch prefetching according to Section \ref{sec:opt}. This indicates that the symmetric batch prefetching is universally optimal for all types in terms of the average rates.
	 \end{remark}

	 We postpone the proof of Lemma \ref{lemma:univ} to Appendix \ref{app:keylemma} and first prove the converse using the lemma.
	
	From (\ref{ineq:17}) and Lemma \ref{lemma:univ}, $R^*$ can be lower bounded as follows:
	\begin{align}
	R^*&\geq\sup_{\epsilon>0} \limsup_{F\rightarrow+\infty}\mathbb{E}_{\boldsymbol{s}}\left[\min_{\boldsymbol{\mathcal{M}}}R_\epsilon^*({\boldsymbol{s}},\boldsymbol{\mathcal{M}})\right] \\&\geq \mathbb{E}_{\boldsymbol{s}}\left[\textup{Conv}\left(\frac{\binom{K}{t+1}-\binom{K-N_{\textup{e}}(\boldsymbol{s})}{t+1}}{\binom{K}{t}}\right)\right].\label{eq:39}
	\end{align}
	Because the sequence 
	\begin{align}
	c_n &= \frac{\binom{K}{n+1}-\binom{K-N_{\textup{e}}(\boldsymbol{s})}{n+1}}{\binom{K}{n}} 
	\end{align}
	is convex, we can switch the order of the expectation and the Conv in (\ref{eq:39}).
		Therefore,  $R^*$ is lower bounded by the rate defined in Theorem \ref{th:p}.\footnote{As noted in Remark \ref{remark:convexity}, the rate $R^*$ stated in equation (\ref{eq:unif_pre}) for $t\in\{0,1,...,K\}$ is convex, so it is sufficient to prove $R^*$ is lower bounded by the convex envelope of its values at $t\in\{0,1,...,K\}$.}

	\section{Extension to the Decentralized Setting}\label{sec:ext}
	In the sections above, we introduced a new centralized caching scheme and a new bounding technique that completely characterize the minimum average communication rate and the minimum peak rate, when the prefetching is required to be uncoded. Interestingly, these techniques can also be extended to fully characterize the rate-memory tradeoff for decentralized caching. In this section, we formally establish a system model for decentralized caching systems, and state the exact rate-memory tradeoff as main results for both the average rate and the peak rate.

	\subsection{System Model and Problem Formulation}

	In many practical systems, out of the large number of users that may potentially request files from the server through the shared error-free link, only a random unknown subset are connected to the link and making requests at any given time instance. To handle this situation, the concept of decentralized prefetching scheme was introduced in \cite{maddah-ali13}, where each user has to fill their caches randomly and independently, based on the same probability distribution. The goal in the decentralized setting is to find a decentralized prefetching scheme, without the knowledge of the number and the identities of the users making requests, to minimize the required communication rates given an arbitrarily large caching system.  
	Based on the above framework, we formally define decentralized caching as follows:
	\begin{definition}
		In a \textit{decentralized caching scheme}, instead of following a deterministic caching scheme, each user $k$ caches a subset $\mathcal{M}_k$ of size no more than $MF$ bits randomly and independently, based on the same probability distribution, denoted by $P_{{\mathcal{M}}}$. Rigorously, when $K$ users are making requests, the probability distribution of the prefetching $\boldsymbol{\mathcal{M}}$ is given by
		\begin{align}
		    \mathbbm{P}\left(\boldsymbol{\mathcal{M}}=(\mathcal{M}_1,...,\mathcal{M}_K)\right)= \prod_{i=1}^{K}{{P}_{\mathcal{M}}(\mathcal{M}_i)}.\nonumber
		\end{align}
		We define that a \textit{decentralized caching scheme}, denoted by ${P}_{\mathcal{M};F}$, is a distribution parameterized by the file size $F$, that specifies the prefetching distribution $P_{\mathcal{M}}$ for all possible values of $F$. 
	\end{definition}

	Similar to the centralized setting, when $K$ users are making requests, we say that a rate $R$ is \textit{$\epsilon$-achievable} given a prefetching distribution $P_{{\mathcal{M}}}$ and a demand $\boldsymbol{d}$ if and only if there exists a message $X$ of length $RF$ such that every active user $k$ is able to recover its desired file $W_{d_k}$ with a probability of error of at most $\epsilon$. This is rigorously defined as follows:  
		
		\begin{definition}
		 When $K$ users are making requests, $R$ is \textit{$\epsilon$-achievable} given a prefetching distribution $P_{{\mathcal{M}}}$ and a demand $\boldsymbol{d}$ if and only if for every possible realization of the prefetching $\boldsymbol{\mathcal{M}}$, we can find a real number $\epsilon_{\boldsymbol{\mathcal{M}}}$, such that $R$ is $\epsilon_{\boldsymbol{\mathcal{M}}}$-achievable given  $\boldsymbol{\mathcal{M}}$ and $\boldsymbol{d}$, and
		   $ \mathbbm{E}[\epsilon_{\boldsymbol{\mathcal{M}}}]\leq \epsilon $.
		\end{definition}
	 We denote $R^*_{\epsilon, K}(\boldsymbol{d}, P_{\mathcal{M}})$ as the minimum $\epsilon$-achievable rate given $K$, $\boldsymbol{d}$ and $P_{{\mathcal{M}}}$, and we define the rate-memory tradeoff for the average rate based on this notation. For each $K\in \mathbb{N}$, and each prefetching scheme $P_{\mathcal{M};F}$, we define the minimum average rate $R^*_{K} ({P}_{\mathcal{M};F})$ as the minimum expected rate under uniformly random demand that can be achieved with vanishing error probability for sufficiently large file size. Specifically,
	 \begin{align}
	 R^*_{K} ({P}_{\mathcal{M};F})=\sup_{\epsilon>0} \limsup_{F'\rightarrow+\infty}\mathbb{E}_{\boldsymbol{d}} [R^*_{\epsilon,K}(\boldsymbol{d}, P_{{\mathcal{M};F}}(F=F'))],\nonumber
	 \end{align}
	 where the demand $\boldsymbol{d}$ is uniformly distributed on $\{1,\dots,N\}^K$.

	 	Given the fact that a decentralized prefetching scheme is designed without the knowledge of the number of active users $K$, we characterize the rate-memory tradeoff using an infinite dimensional vector, denoted by 
	 	$\{R_K\}_{K\in\mathbb{N}}$, where each term $R_{K}$ corresponds to the needed communication rates when $K$ users are making requests. We aim to find the region in this infinite dimensional vector space that can be achieved by any decentralized prefetching scheme, and
	 	we denote this region by $\mathcal{R}$.  Rigorously, we aim to find
		\begin{align}
	\mathcal{R}= \underset{P_{{\mathcal{M};F}}}{\cup} \{\{R_K\}_{K\in\mathbb{N}} \ |\  \forall K\in\mathbb{N},  R_K\geq R^*_{K}(P_{{\mathcal{M};F}}) \}\nonumber,
	\end{align}
   which is a function of $N$ and $M$.

	Similarly, we 
	 define the rate-memory tradeoff for the peak rate as follows: For each $K\in \mathbb{N}$, and each prefetching scheme $P_{\mathcal{M};F}$, we define the minimum peak rate $R^*_{K,\textup{peak}} ({P}_{\mathcal{M};F})$ as the minimum  communication rate that can be achieved with vanishing error probability for sufficiently large file size, for the worst case demand. Specifically,
	 \begin{align}
	 R^*_{K,\textup{peak}} ({P}_{\mathcal{M};F})=\sup_{\epsilon>0}\adjustlimits \limsup_{F'\rightarrow+\infty}\max_{\boldsymbol{d}\in \mathcal{D}} [R^*_{\epsilon,K}(\boldsymbol{d}, P_{{\mathcal{M};F}}(F=F'))],\nonumber
	 \end{align}
	 
	We aim to find the region in the infinite dimensional vector space that can be achieved by any decentralized prefetching scheme in terms of the peak rate, and we denote this region  $\mathcal{R}_{\textup{peak}}$. Rigorously, we aim to find 
	\begin{align}
	\mathcal{R}_{\textup{peak}}= \underset{P_{{\mathcal{M};F}}}{\cup}\{\{R_K\}_{K\in\mathbb{N}} \ |\ \forall K\in\mathbb{N},  R_K\geq R^*_{K,\textup{peak}}(P_{{\mathcal{M};F}}) \}\nonumber,
	\end{align}
	as a function of $N$ and $M$.

		\subsection{Exact Rate-Memory Tradeoff for Decentralized Setting}
		
		The following theorem completely characterizes the rate-memory tradeoff for the average rate in the decentralized setting:
		\begin{theorem} 
			\label{thm:ave-dec}
			For a decentralized caching problem with parameters $N$ and $M$, $\mathcal{R}$ is completely characterized by the following equation:
			\begin{align}
			\mathcal{R}=&\left\{\{R_K\}_{K\in\mathbb{N}}\
			\vphantom{\left. R_K\geq \mathbb{E}_{\boldsymbol{d}}\left[\frac{N-M}{M} \left(1-\left(\frac{N-M}{N}\right)^{N_{\textup{e}}(\boldsymbol{d})}\right)\right]\right\}}    
			\right|\nonumber\\ 
			&\ \  \left. R_K\geq \mathbb{E}_{\boldsymbol{d}}\left[\frac{N-M}{M} \left(1-\left(\frac{N-M}{N}\right)^{N_{\textup{e}}(\boldsymbol{d})}\right)\right]\right\},
			\end{align}
			where demand $\boldsymbol{d}$ given each $K$ is uniformly distributed on $\{1,...,N\}^K$
			 and $N_{\textup{e}}(\boldsymbol{d})$ denotes the number of distinct requests in $\boldsymbol{d}$.\footnote{If $M=0$, $\mathcal{R}=\{\{R_K\}_{K\in\mathbb{N}}\ |\  R_K\geq \mathbb{E}_{\boldsymbol{d}}[{N_{\textup{e}}(\boldsymbol{d})}]\}$.}
		\end{theorem}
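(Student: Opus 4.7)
The plan is to prove Theorem \ref{thm:ave-dec} by combining an explicit decentralized scheme with a matching converse, extending the centralized machinery of Sections \ref{sec:opt} and \ref{sec:conv}.

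For achievability I would take the scheme in which user $k$ includes each of the $NF$ bits of the database in its cache independently with probability $M/N$. Given a realization $\boldsymbol{\mathcal{M}}$, split each file $i$ into the $2^K$ pieces $W_{i,\mathcal{A}}$ indexed by the subset $\mathcal{A}\subseteq\{1,\dots,K\}$ of users that cache them. A Chernoff bound plus a union bound over $(i,\mathcal{A})$ shows that $|W_{i,\mathcal{A}}|/F\to(M/N)^{|\mathcal{A}|}(1-M/N)^{K-|\mathcal{A}|}$ with probability $1-o(1)$ as $F\to\infty$, uniformly in $(i,\mathcal{A})$. On this typical event the size-$t$ layer is exactly a symmetric batch prefetching at parameter $t$, so running the leader-based delivery of Section \ref{sec:opt} on each layer separately, with a single common set of $N_{\textup{e}}(\boldsymbol{d})$ leaders, gives per-demand rate
\begin{align}
R(\boldsymbol{d})=\sum_{t=0}^{K}\left(\binom{K}{t+1}-\binom{K-N_{\textup{e}}(\boldsymbol{d})}{t+1}\right)p^{t}(1-p)^{K-t},\nonumber
\end{align}
with $p=M/N$. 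The substitution $s=t+1$ in each binomial sum, together with $\binom{K}{K+1}=0$, collapses this expression to $\tfrac{1-p}{p}(1-(1-p)^{N_{\textup{e}}(\boldsymbol{d})})=\tfrac{N-M}{M}(1-((N-M)/N)^{N_{\textup{e}}(\boldsymbol{d})})$. Averaging over uniformly random $\boldsymbol{d}$ and observing that the scheme is oblivious to $K$ shows that the region in Theorem \ref{thm:ave-dec} is achievable simultaneously for every $K$.

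For the converse I would adapt the type-based bound of Section \ref{sec:conv} to exploit the user-IID constraint. Fix an arbitrary decentralized scheme $P_{\mathcal{M};F}$ and any $K$. Because the $K$ caches are IID samples of $P_{\mathcal{M}}$, for each bit $(i,j)$ the $K$ indicators $\mathbbm{1}[(i,j)\in\mathcal{M}_k]$ are IID Bernoulli$(q_{i,j})$ with $\sum_{i,j}q_{i,j}\le MF$. Consequently, for every file and every size $t$, the fraction of bits of that file cached by exactly a given size-$t$ subset of users concentrates around the corresponding per-bit Bernoulli expression. I would then apply Lemma \ref{lemma:univ} to each type $\mathcal{D}_{\boldsymbol{s}}$ to lower-bound the delivery rate layer by layer, sum the per-layer bounds, use Jensen on the convex envelope $\textup{Conv}(\cdot)$, and optimize in the per-bit probabilities subject to the memory budget. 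The same binomial identity used in the achievability collapses the resulting expression to $\tfrac{N-M}{M}(1-((N-M)/N)^{N_{\textup{e}}(\boldsymbol{s})})$ for each type, which after averaging over types matches the right-hand side of Theorem \ref{thm:ave-dec}.

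The principal obstacle is the converse, because Lemma \ref{lemma:univ} on its own recovers only the strictly looser centralized bound and all of the extra strength must come from the user-IID constraint. The hardest step will be justifying that within-user correlations in $P_{\mathcal{M}}$ cannot help at the level of the average rate, so that an arbitrary $P_{\mathcal{M};F}$ reduces to an effectively per-bit-independent caching via a symmetrization-over-users argument together with a typicality/concentration step over the $F$ bits of each file. Once this reduction is in place, the per-layer application of Lemma \ref{lemma:univ} and the algebraic identity from the achievability close the argument and establish the matching outer bound.
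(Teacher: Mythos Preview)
Your achievability is essentially the paper's (Appendix~\ref{app:decent-opt}): uniformly random prefetching, layer-by-layer application of the centralized leader-based delivery, and the same binomial-sum collapse.

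Your converse plan is muddled in several places and, as written, would not close the gap. You correctly note that applying Lemma~\ref{lemma:univ} at the fixed value of $K$ gives only the looser centralized bound, but the proposed remedy---``layer-by-layer'' application of the lemma plus symmetrization and a concentration/typicality step---does not correspond to an actual argument. There is no per-layer lower bound available (Lemma~\ref{lemma:univ} is a single global inequality), and invoking Jensen on $\textup{Conv}(\cdot)$ as in its proof is precisely the step that discards the information you need. The ``hardest step'' you flag (within-user correlations in $P_{\mathcal{M}}$) is in fact a non-issue: only the expectation of $a_t$ is required, and by linearity together with the across-user IID assumption one has directly $\mathbb{E}[a_t]=\sum_{i,j}\binom{K}{t}q_{i,j}^t(1-q_{i,j})^{K-t}$ with $q_{i,j}=\mathbb{P}[(i,j)\in\mathcal{M}_1]$; no concentration or symmetrization is needed. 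A correct route in your spirit would be: stop the proof of Lemma~\ref{lemma:univ} at the pre-Jensen inequality $R_\epsilon^*(\boldsymbol{s},\boldsymbol{\mathcal{M}})\ge\sum_t\tfrac{a_t}{NF}c_t-(\cdots)$, take expectation over $\boldsymbol{\mathcal{M}}$, rewrite via the binomial identity for $\mathbb{E}[a_t]$, observe that $g(q)\triangleq\sum_t c_t\binom{K}{t}q^t(1-q)^{K-t}=\sum_{j=1}^{N_{\textup{e}}(\boldsymbol{s})}(1-q)^j$ is convex and decreasing in $q$, and apply Jensen over the $NF$ per-bit marginals under $\tfrac{1}{NF}\sum_{i,j} q_{i,j}\le M/N$.

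The paper takes a completely different route (Appendix~\ref{app:decent-ave}). Rather than refining the $K$-user bound through the cache distribution, it exploits the IID structure to \emph{embed}: since users cache independently from the same law regardless of how many there are, the $K$ requesting users may be regarded as a subset of a larger system with $K'$ users in which the remaining $K'-K$ make no request (``generalized demands'', Lemma~\ref{lemma:univ-gen}). The type-$\boldsymbol{s}$ average rate is shown to be independent of $K'$, so the centralized bound of Lemma~\ref{lemma:univ-gen} may be applied for every $K'$ and one sends $K'\to\infty$ with $t'=K'M/N$, obtaining exactly $\tfrac{N-M}{M}\bigl(1-(1-M/N)^{N_{\textup{e}}(\boldsymbol{s})}\bigr)$ (Lemma~\ref{lemma:univ-dec}). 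This avoids any new convexity computation and makes transparent the interpretation in Remark~\ref{remark:fund-ave}: the decentralized lower bound is simply the large-system limit of the centralized one.
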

		The proof of the above theorem is provided in Appendix \ref{app:decent-ave}.
		
		\begin{remark}
			Theorem \ref{thm:ave-dec} demonstrates that $\mathcal{R}$ has a very simple shape with one dominating point: $\{R_K=\mathbb{E}_{\boldsymbol{d}}[\frac{N-M}{M} (1-(\frac{N-M}{N})^{N_{\textup{e}}(\boldsymbol{d})})]\}_{K\in\mathbb{N}}$. In other words, we can find a decentralized prefetching scheme that simultaneously achieves the minimum expected rates for all possible numbers of active users. Therefore, there is no tension among the expected rates for different numbers of active users. In Appendix \ref{app:decent-ave}, we will show that one example of the optimal prefetching scheme is to let each user cache $\frac{MF}{N}$ bits in each file uniformly independently.  
		\end{remark}
		
			\begin{remark}
				To prove Theorem \ref{thm:ave-dec}, we propose a decentralized caching scheme that strictly improves the state of the art \cite{maddah-ali13, amiri2016coded} (see Appendix \ref{app:decent-opt}), for both the average rate and the peak rate. In particular for the average rate, the state-of-the-art scheme proposed in \cite{maddah-ali13} achieves the rate $\frac{N-M}{N}\cdot\min\{\frac{N}{M}(1-(1-\frac{M}{N})^K),\mathbb{E}_{\boldsymbol{d}}[N_e(\boldsymbol{d})]\}$, which is strictly larger than the rate achieved by our proposed scheme $\mathbb{E}_{\boldsymbol{d}}[\frac{N-M}{M}(1-(\frac{N-M}{N})^{N_{\textup{e}}(\boldsymbol{d})})]$ in most cases. Similarly one can show that our scheme strictly improves \cite{amiri2016coded}, and we omit the details for brevity.
			\end{remark}

		 \begin{remark}\label{remark:fund-ave}

		 			We also prove a matching information-theoretic outer bound of $\mathcal{R}$, by showing that the achievable rate of any decentralized caching scheme can be lower bounded by the achievable rate of a caching scheme with centralized prefetching that is used on a system where, there are a large number of users that may potentially request a file, but only a subset of $K$ users are actually making the request.
		 	Interestingly, the tightness of this bound indicates that, in a system where the number of potential users is significantly larger than the number of active users, our proposed decentralized caching scheme is optimal, even compared to schemes where the users are not caching according to an i.i.d..
		 	
		 \end{remark}
		
	Using the proposed decentralized caching scheme and the same converse bounding technique, the following corollary, which completely characterizes the rate-memory tradeoff for the peak rate in the decentralized setting, directly follows:
		\begin{corollary} 
			\label{cor:dec}
			For a decentralized caching problem with parameters $N$ and $M$, the achievable region $\mathcal{R}_{\textup{peak}}$ is completely characterized by the following equation:\footnote{If $M=0$, $\mathcal{R}=\{\{R_K\}_{K\in\mathbb{N}}\ |\  R_K\geq \min\{N,K\}\}$.}
			\begin{align}
			\mathcal{R}_{\textup{peak}}=&\left\{\{R_K\}_{K\in\mathbb{N}}\ 
			\vphantom{\left. R_K\geq \frac{N-M}{M} \left(1-\left(\frac{N-M}{N}\right)^{\min\{N,K\}}\right)\right\}}
			\right|\nonumber \\
			&\ \  \left. R_K\geq \frac{N-M}{M} \left(1-\left(\frac{N-M}{N}\right)^{\min\{N,K\}}\right)\right\}.
			\end{align}
		\end{corollary}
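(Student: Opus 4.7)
The plan is to mirror the two-part structure of Theorem \ref{thm:ave-dec}: exhibit a decentralized prefetching scheme that is simultaneously optimal for every $K$ under peak rate, then prove a matching information-theoretic converse, so that $\mathcal{R}_{\textup{peak}}$ collapses to a single dominating point as claimed.

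For achievability, I would take the i.i.d.\ bit-level prefetching scheme (each user caches each bit of each file independently with probability $M/N$) that is shown in Appendix \ref{app:decent-ave} to be optimal for the average rate. Conditioned on a realized prefetching, the bits of each file naturally partition into sub-files $W_{i,\mathcal{A}}$ indexed by the subset $\mathcal{A}\subseteq\{1,\dots,K\}$ of users that cached them. Applying the leader-based centralized delivery scheme of Section \ref{sec:opt} to each batch size separately, the rate given a demand $\boldsymbol{d}$ concentrates, as $F\to\infty$, around
\begin{align}
\sum_{t=0}^{K-1}\left[\binom{K}{t+1}-\binom{K-N_{\textup{e}}(\boldsymbol{d})}{t+1}\right]\left(\tfrac{M}{N}\right)^{t}\!\left(1-\tfrac{M}{N}\right)^{K-t}\!\!\!.
\end{align}
Maximizing over $\boldsymbol{d}\in\mathcal{D}$ replaces $N_{\textup{e}}(\boldsymbol{d})$ by $\min\{N,K\}$, and a direct binomial simplification (essentially $\sum_{s\geq 1}\binom{K}{s}p^{s-1}q^{K-s+1}-\sum_{s\geq 1}\binom{K-m}{s}p^{s-1}q^{K-s+1}=\frac{q}{p}(1-q^{m})$ with $p=M/N$, $q=1-p$, $m=\min\{N,K\}$) collapses the sum to $\frac{N-M}{M}(1-(\frac{N-M}{N})^{\min\{N,K\}})$. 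A union bound over the finite set $\mathcal{D}$ combined with standard Chernoff concentration on the number of bits cached by each subset ensures the worst-case rate converges to this value with vanishing probability of error.

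For the converse, I would reuse the embedding argument of Theorem \ref{thm:ave-dec} outlined in Remark \ref{remark:fund-ave}: any decentralized scheme $P_{\mathcal{M};F}$ induces, on any fixed set of $K$ active users, a (possibly random) uncoded centralized prefetching. Since peak rate considers a worst-case demand, it suffices to pick any demand with $N_{\textup{e}}(\boldsymbol{d})=\min\{N,K\}$ and apply the single-type centralized converse of Lemma \ref{lemma:univ}, which lower bounds $R_{\epsilon}^{*}(\boldsymbol{s},\boldsymbol{\mathcal{M}})$ in terms of the lower convex envelope of $[\binom{K}{t+1}-\binom{K-\min\{N,K\}}{t+1}]/\binom{K}{t}$ evaluated at $t=KM/N$. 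Averaging this bound against the randomness of $\boldsymbol{\mathcal{M}}$ and invoking convexity (Jensen) in the per-bit cache occupancy yields exactly the expression in the corollary.

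The main obstacle I expect is twofold: (i) the decentralized scheme is designed without knowing $K$, so the same $P_{\mathcal{M};F}$ must be simultaneously optimal for every $K$; the i.i.d.\ per-bit construction overcomes this because its marginal on any $K$-user subset retains the product structure required by the centralized delivery analysis, but checking that peak-rate optimality holds uniformly in $K$ (not only pointwise) is the subtle step; and (ii) turning the centralized type-based bound into a valid lower bound for \emph{any} decentralized scheme requires carefully showing that the induced centralized prefetching statistics are themselves concentrated, which is where the convexity of the centralized rate in $t$ and Jensen's inequality do the work. Once these two points are handled, the achievable rate and the converse match, establishing $\mathcal{R}_{\textup{peak}}$ exactly as stated.
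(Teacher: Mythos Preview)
Your achievability argument is correct and matches the paper's: the uniformly random prefetching together with the leader-based delivery applied batch-by-batch gives exactly the claimed rate, and the binomial simplification you wrote is right.

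Your converse, however, has a real gap. Applying Lemma~\ref{lemma:univ} directly with $K$ users gives the bound $\textup{Conv}\bigl([\binom{K}{t+1}-\binom{K-\min\{N,K\}}{t+1}]/\binom{K}{t}\bigr)$ at $t=KM/N$, and this quantity is \emph{strictly smaller} than $\frac{N-M}{M}\bigl(1-(\tfrac{N-M}{N})^{\min\{N,K\}}\bigr)$ for finite $K$ (e.g.\ $N=K=2$, $M=1$ gives $1/2$ versus $3/4$). Averaging over the randomness of $\boldsymbol{\mathcal{M}}$ and invoking Jensen cannot close this gap: Lemma~\ref{lemma:univ} already minimizes over all prefetchings, so the bound it returns is deterministic and independent of the realized $\boldsymbol{\mathcal{M}}$. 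The convexity you invoke in obstacle~(ii) is not the missing ingredient.

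What the paper actually uses (Lemma~\ref{lemma:univ-dec} and its proof) is the observation that a decentralized prefetching distribution $P_{\mathcal{M}}$ must work for \emph{every} number of users simultaneously. Hence one can embed the $K$ active users into a system with $K'$ total users, only $K$ of whom request (a \emph{generalized} demand), apply the centralized type bound (Lemma~\ref{lemma:univ-gen}) with $K'$ users, and then let $K'\to\infty$. In that limit the centralized convex-envelope expression converges to the decentralized closed form. The i.i.d.\ structure of $P_{\mathcal{M}}$ is what makes $R^*_{\epsilon,K'}(\boldsymbol{s},P_{\mathcal{M}})$ independent of $K'$ once $K'\geq K_{\boldsymbol{s}}$, legitimizing the limit. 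You gestured at this via Remark~\ref{remark:fund-ave}, but your stated plan applies the bound at $K$, not at $K'\to\infty$; that is the step you need to fix.
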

		The proof of the above corollary is provided in Appendix \ref{app:decent-worst}.
	
		\begin{remark}
			Corollary \ref{cor:dec} demonstrates that $\mathcal{R}_{\textup{peak}}$ has a very simple shape with one dominating point: $\{R_K=\frac{N-M}{M} (1-(\frac{N-M}{N})^{\min\{N,K\}})\}_{K\in\mathbb{N}}$. In other words, we can find a decentralized prefetching scheme that simultaneously achieves the minimum peak rates for all possible numbers of active users. Therefore, there is no tension among the peak rates for different numbers of active users. In Appendix \ref{app:decent-worst}, we will show that one example of the optimal prefetching scheme is to let each user cache $\frac{MF}{N}$ bits in each file uniformly independently.  
		\end{remark}
		
		\begin{remark}\label{remark:fund-w}
		
		Similar to the average rate case, a matching converse can be proved by deriving the minimum achievable rates of centralized caching schemes in 			
			a system where only a subset of users are actually making the request. 
			Consequently, 
			in a caching system where the number of potential users is significantly larger than the number of active users, our proposed decentralized scheme is also optimal in terms of peak rate, even compared to schemes where the users are not caching according to an i.i.d..
		\end{remark}

 	\begin{figure}[htbp]
\centering
\begin{subfigure}{.45\textwidth}
  \centering
  \captionsetup{justification=centering}
  \includegraphics[width=0.95\linewidth]{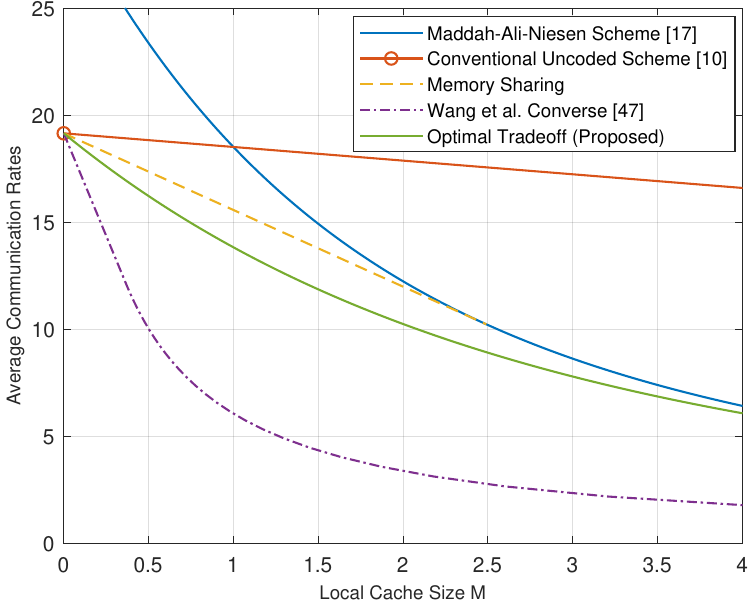}
  \caption{Average rates for $N=K=30$. For this scenario, the best communication rate stated in prior works is achieved by the memory-sharing between the conventional uncoded scheme \cite{maddah-ali12a} and the Maddah-Ali-Niesen scheme \cite{maddah-ali13}. The tightest prior converse bound in this scenario is provided by \cite{DBLP:journals/corr/WangLG16}. }
\end{subfigure}
\vspace{5mm}

\begin{subfigure}{.45\textwidth}
  \centering
  \captionsetup{justification=centering}
  \includegraphics[width=0.95\linewidth]{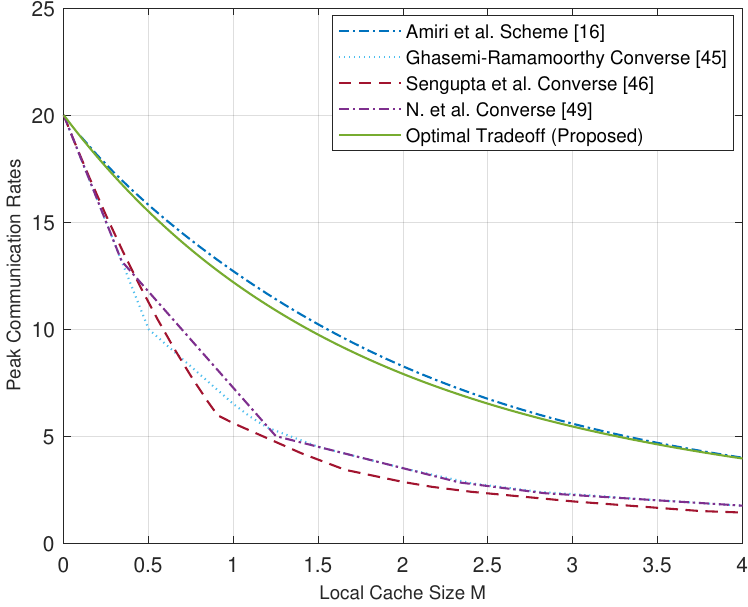}
  \caption{Peak rates for $N=20$, $K=40$. For this scenario, the best communication rate stated in prior works is achieved by the Amiri et al. scheme \cite{amiri2016coded}. The tightest prior converse bound in this scenario is provided by \cite{ghasemi15,sengupta15,prem2015critical}. }
\end{subfigure}
\vspace{3mm}
\caption{Numerical comparison between the optimal tradeoff and the state of the arts for the decentralized setting. Our results strictly improve the prior arts in both achievability and converse, for both average rate and peak rate.}
\label{fig:decent_compare}
\end{figure}

\begin{remark}
  We numerically compare our results with the state-of-the-art schemes and the converses for the decentralized setting. As shown in Fig. \ref{fig:decent_compare}, both the achievability scheme and the converse provided in our paper strictly improve the prior arts, for both average rate and peak rate.
\end{remark}

\section{Concluding Remarks}\label{sec:conclu}
		\begin{figure}[htbp]
    		\centering
    		\includegraphics[width=0.45\textwidth]{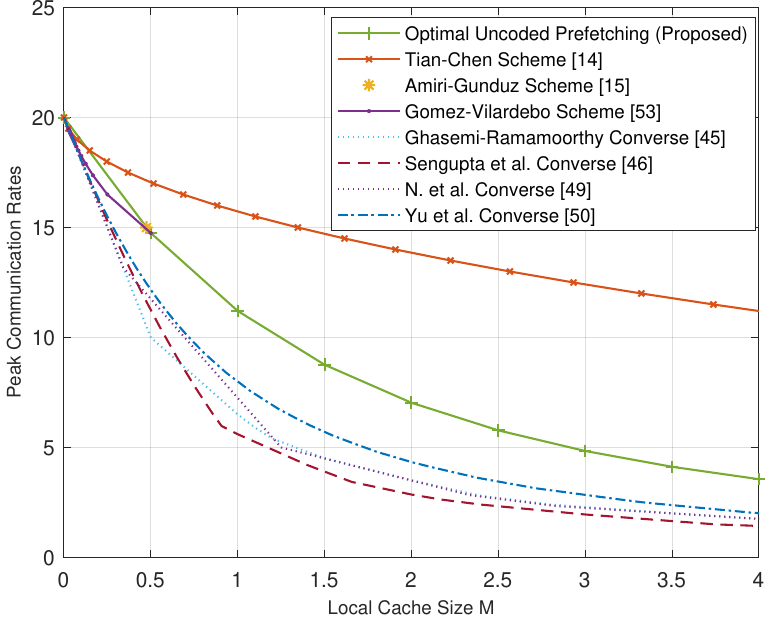}
    		\caption{Achievable peak communication rates for centralized schemes that allow coded prefetching. For $N=20$, $K=40$, we compare our proposed achievability scheme with prior-art coded-prefetching schemes \cite{tian2016caching,amiri2016fundamental}, prior-art converse bounds \cite{ghasemi15,sengupta15,prem2015critical}, and two recent results \cite{yu2017characterizing,gomez2016fundamental}. The achievability scheme proposed in this paper achieves the best performance to date in most cases, and is within a factor of $2$ optimal as shown in \cite{yu2017characterizing},  even compared with schemes that allow coded prefetching.
    }
    		\label{fig:coded}
    	\end{figure}

	In this paper, we characterized the rate-memory tradeoff for the coded caching problem with uncoded prefetching. To that end, we proposed the optimal caching schemes for both centralized setting and decentralized setting, and proved their exact optimality for both average rate and peak rate. 
	The techniques we introduced in this paper can
    be directly applied to many other problems, immediately improving their state of the arts. For instance, the achievability scheme proposed in this paper has already been applied in various different settings, achieving improved results   \cite{chugh2017improved,wan2017novel,yu2017characterizing}. Beyond these works, the techniques can also be  applied in directions such as online caching \cite{pedarsani13}, caching with non-uniform demands \cite{niesen13}, and hierarchical caching \cite{karamchandani14}, where improvements can be immediately achieved by directly plugging in our results.

	One interesting follow-up direction is to consider coded caching problem with coded placement. In this scenario, it has been shown that in the centralized setting, coded prefetching schemes can achieve better peak communication rates. For example, Figure \ref{fig:coded} shows that one can improve the peak communication rate by coded placement when the cache size is small. In a recent work \cite{yu2017characterizing} we have shown that, through a new converse bounding technique,  the achievability scheme we proposed in this paper is optimal within a factor of $2$. However, finding the exact optimal solution in this regime remains an open problem.

  \appendices
  \section{Proof of Lemma \ref{lemma:univ}}\label{app:keylemma}
  \begin{proof}
  
  The Proof of Lemma \ref{lemma:univ} is organized as follows: We start by proving a lower bound of the communication rate required for a single demand, i.e., $R_\epsilon^*(\boldsymbol{d},\boldsymbol{\mathcal{M}})$.  By averaging this lower bound over a single demand type $\mathcal{D}_{\boldsymbol{s}}$, we automatically obtain a lower bound for the rate $R_\epsilon^*(\boldsymbol{s},\boldsymbol{\mathcal{M}})$. Finally we bound the minimum possible $R_\epsilon^*(\boldsymbol{s},\boldsymbol{\mathcal{M}})$ over all prefetching schemes by solving for the minimum value of our derived lower bound.

  	We first use a genie-aided approach to derive a lower bound of $R_\epsilon^*(\boldsymbol{d},\boldsymbol{\mathcal{M}})$ for any demand $\boldsymbol{d}$ and for any prefetching $\boldsymbol{\mathcal{M}}$:	Given a demand $\boldsymbol{d}$, let $\mathcal{U}=\{u_1,...,u_{N_{\textup{e}}(\boldsymbol{d})}\}$ be an arbitrary subset with $N_{\textup{e}}(\boldsymbol{d})$ users that request distinct files.  We construct a virtual user whose cache is initially empty. Suppose for each $\ell\in\{1,...,N_{\textup{e}}(\boldsymbol{d})\}$, a genie fills the cache with the value of bits that are cached by $u_\ell$, but not from files requested by users in $\{u_1,...,u_{\ell-1}\}$. Then with all the cached information provided by the genie, the virtual user should be able to inductively decode all files requested by users in $\mathcal{U}$ upon receiving the message $X$.  Consequently, a lower bound on the communication rate $R_\epsilon^*(\boldsymbol{d},\boldsymbol{\mathcal{M}})$ can be obtained by applying a cut-set bound on the virtual user.

  	Specifically, we prove that the virtual user can decode all  $N_{\textup{e}}(\boldsymbol{d})$ requested files with high probability, by inductively decoding each file $d_{u_\ell}$ using the decoding function of user $u_\ell$, from $\ell=1$ to  $\ell={N_\textup{e}(\boldsymbol{d})}$:	Recall that any communication rate is $\epsilon$-achievable if the error probability of each decoding function is at most   $\epsilon$. Consequently, the probability that all $N_{\textup{e}}(\boldsymbol{d})$ decoding functions can correctly decode the requested files is at least $1-N_{\textup{e}}(\boldsymbol{d})\epsilon$. In this scenario, the virtual user can correctly decode all the files, given that at every single step of induction, all bits necessary for the decoding function have either been provided by the genie, or decoded in previous inductive steps. 
  	
  	Given this decodability, we can lower bound the needed communication load using Fano's inequality:
  	\begin{align}
  		R_\epsilon^*(\boldsymbol{d},&\boldsymbol{\mathcal{M}})F\geq \nonumber\\&H\left( \left.\{W_{d_{u_{\ell}}}\}_{\ell=1}^{N_{\textup{e}}(\boldsymbol{d})}\ \right|\ \textup{Bits cached by the virtual user} \right)\nonumber\\&-(1+ N^2_{\textup{e}}(\boldsymbol{d})\epsilon F ).
  	\end{align}
  	Recall that all bits in the library are i.i.d. and uniformly random, the cut-set bound in the above inequality essentially equals the number of bits in the $N_{\textup{e}}(\boldsymbol{d})$ requested files that are not cached by the virtual user. This set includes all bits in each file $d_{u_{\ell}}$ that are not cached by any users in $\{u_1,...,u_{\ell}\}$. Hence, the above lower bound is essentially
  	\begin{align}
  	&R_\epsilon^*(\boldsymbol{d},\boldsymbol{\mathcal{M}})F\geq\nonumber\\& \sum_{\ell=1}^{N_{\textup{e}}(\boldsymbol{d})} \sum_{j=1}^{F} \mathbbm{1}\left(B_{d_{u_\ell},j}\textup{ is not cached by any user in } \{u_1,...,u_{\ell}\}\right)\nonumber\\&-(1+ N^2_{\textup{e}}(\boldsymbol{d})\epsilon F ).
  	\end{align}
  	where $B_{i,j}$ denotes the $j$th bit in file $i$.
  	To simplify the discussion, we let $\mathcal{K}_{i,j}$ denote the subset of users that caches $B_{i,j}$. The above lower bound can be equivalently written as
  		\begin{align}
  	R_\epsilon^*(\boldsymbol{d},\boldsymbol{\mathcal{M}})F\geq &\sum_{\ell=1}^{N_{\textup{e}}(\boldsymbol{d})} \sum_{j=1}^{F} \mathbbm{1}\left(\mathcal{K}_{d_{u_\ell},j}\cap \{u_1,...,u_{\ell}\}=\emptyset\right)\nonumber\\&-(1+ N^2_{\textup{e}}(\boldsymbol{d})\epsilon F ).
  	\end{align}
  	
  	Using the above inequality, we derive a lower bound of the average rates as follows:
  	For any positive integer $i$, we denote the set of all permutations on $\{1,...,i\}$ by $\mathcal{P}_i$. Then, for each $p_1\in\mathcal{P}_K$ and $p_2\in\mathcal{P}_N$ given a demand $\boldsymbol{d}$, we define $\boldsymbol{d}(p_1,p_2)$ as a demand satisfying, for each user $k$, $d_k(p_1,p_2)=p_2 (d_{p_1^{-1}(k)})$. We can then apply the above bound to any demand $\boldsymbol{d}(p_1,p_2)$:
  		\begin{align}
  	R_\epsilon^*&(\boldsymbol{d}(p_1,p_2),\boldsymbol{\mathcal{M}})F\geq\nonumber\\& \sum_{\ell=1}^{N_{\textup{e}}(\boldsymbol{d})} \sum_{j=1}^{F} \mathbbm{1}\left(\mathcal{K}_{p_2(d_{u_\ell}),j}\cap \{p_1(u_1),...,p_1(u_{\ell})\}=\emptyset\right)\nonumber\\&-(1+ N^2_{\textup{e}}(\boldsymbol{d})\epsilon F ).\label{eq:genie_cut_p}
  	\end{align}
  	  	It is easy to verify that by taking the average of (\ref{eq:genie_cut_p}) over all pairs of $(p_1,p_2)$, only
  	the rates for demands in type $\mathcal{D}_{\boldsymbol{s}(\boldsymbol{d})}$ are counted, and each of them is counted the same number of times due to symmetry.
  	Consequently,  this approach provides us with a lower bound on the average rate within type $\mathcal{D}_{\boldsymbol{s}(\boldsymbol{d})}$, which is stated as follows:
  	\begin{align}
  	R^*_\epsilon({\boldsymbol{s}(\boldsymbol{d})},\boldsymbol{\mathcal{M}}) = & \frac{1}{K!N!} \sum_{p_1\in\mathcal{P}_K}\sum_{p_2\in\mathcal{P}_N} R^*_\epsilon(\boldsymbol{d}(p_1,p_2),\boldsymbol{\mathcal{M}}) \\\geq& \frac{1}{K!N!F} \sum_{p_1\in\mathcal{P}_K}\sum_{p_2\in\mathcal{P}_N }\sum_{\ell=1}^{N_{\textup{e}}(\boldsymbol{d})} \sum_{j=1}^{F} \nonumber\\ &\mathbbm{1}\left(\mathcal{K}_{p_2(d_{u_\ell}),j}\cap \{p_1(u_1),...,p_1(u_{\ell})\}=\emptyset\right)\nonumber\\&-\left(\frac{1}{F}+ N^2_{\textup{e}}(\boldsymbol{d})\epsilon  \right).\label{ineq:avebound1}
  	\end{align}
  	We aim to simplify the above lower bound, in order to find its minimum to prove Lemma \ref{lemma:univ}. To simplify this result, we first exchange the order of the summations and evaluate $\frac{1}{K!}\sum\limits_{p_1\in\mathcal{P}_K} \mathbbm{1}\left(\mathcal{K}_{p_2(d_{u_\ell}),j}\cap \{p_1(u_1),...,p_1(u_{\ell})\}=\emptyset\right)$. This is essentially the probability of selecting $\ell$ distinct users $\{p_1(u_1),...,p_1(u_{\ell})\}$ uniformly at random, such that none of them belongs to  $\mathcal{K}_{p_2(d_{u_\ell})}$. Out of the $\binom{K}{\ell}$ subsets, $\binom{K-|\mathcal{K}_{p_2(d_{u_\ell}),j}|}{\ell}$ of them satisfy this condition,\footnote{Recall that we define $\binom{n}{k}=0$ when $k>n$.} which gives the following identity:
  	\begin{align}
  	\frac{1}{K!}\sum\limits_{p_1\in\mathcal{P}_K} \mathbbm{1}\left(\mathcal{K}_{p_2(d_{u_\ell}),j}\cap \{p_1(u_1),...,p_1(u_{\ell})\}=\emptyset\right)=&\nonumber\\ \frac{\binom{K-|\mathcal{K}_{p_2(d_{u_\ell}),j}|}{\ell}}{\binom{K}{\ell}}&.\label{eq:ident1}
  	\end{align}
  	Hence, inequality (\ref{ineq:avebound1}) can be simplified based on (\ref{eq:ident1}) and the above discussion.
  		\begin{align}
  	R^*_\epsilon({\boldsymbol{s}(\boldsymbol{d})},\boldsymbol{\mathcal{M}}) \geq & \frac{1}{N!F} \sum_{p_2\in\mathcal{P}_N}\sum_{\ell=1}^{N_{\textup{e}}(\boldsymbol{d})} \sum_{j=1}^{F} \frac{1}{K!}\sum_{p_1\in\mathcal{P}_K} \nonumber\\&\mathbbm{1}\left(\mathcal{K}_{p_2(d_{u_\ell}),j}\cap \{p_1(u_1),...,p_1(u_{\ell})\}=\emptyset\right)\nonumber\\&-\left(\frac{1}{F}+ N^2_{\textup{e}}(\boldsymbol{d})\epsilon  \right)\\
  	=&\frac{1}{N!F} \sum_{p_2\in\mathcal{P}_N}\sum_{\ell=1}^{N_{\textup{e}}(\boldsymbol{d})} \sum_{j=1}^{F}  \frac{\binom{K-|\mathcal{K}_{p_2(d_{u_\ell}),j}|}{\ell}}{\binom{K}{\ell}}\nonumber\\&-\left(\frac{1}{F}+ N^2_{\textup{e}}(\boldsymbol{d})\epsilon  \right).\label{ineq:avebound2}
  	\end{align}
  	We further simplify this result by computing the summation over $p_2$ and $j$, and evaluating $\frac{1}{N!F}\sum\limits_{p_2\in\mathcal{P}_N}\sum\limits_{j=1}^{F} \binom{K-|\mathcal{K}_{p_2(d_{u_\ell}),j}|}{\ell}$. This is essentially the expectation of $\binom{K-|\mathcal{K}_{i,j}|}{\ell}$ over a uniformly randomly selected bit $B_{i,j}$.
  	Let $a_n$ denote the number of bits in the database that are cached by exactly $n$ users, then $|\mathcal{K}_{i,j}|=n$ holds for $\frac{a_n}{NF}$ fraction of the bits. Consequently, we have
  	\begin{align}
  	\frac{1}{N!F}\sum\limits_{p_2\in\mathcal{P}_N}\sum\limits_{j=1}^{F} \binom{K-|\mathcal{K}_{p_2(d_{u_\ell}),j}|}{\ell}=\sum_{n=0}^{K} \frac{a_n}{NF}\cdot \binom{K-n}{\ell}.
  	\end{align}
  	We simplify (\ref{ineq:avebound2}) using the above identity:
  		\begin{align}
  	R^*_\epsilon({\boldsymbol{s}(\boldsymbol{d})},\boldsymbol{\mathcal{M}}) \geq& \sum_{\ell=1}^{N_{\textup{e}}(\boldsymbol{d})} \frac{1}{N!F} \sum_{p_2\in\mathcal{P}_N}\sum_{j=1}^{F}  \frac{\binom{K-|\mathcal{K}_{p_2(d_{u_\ell}),j}|}{\ell}}{\binom{K}{\ell}}\nonumber\\&-\left(\frac{1}{F}+ N^2_{\textup{e}}(\boldsymbol{d})\epsilon  \right)\\
  	=&\sum_{\ell=1}^{N_{\textup{e}}(\boldsymbol{d})} \sum_{n=0}^{K} \frac{a_n}{NF}\cdot \frac{\binom{K-n}{\ell}}{\binom{K}{\ell}}-\left(\frac{1}{F}+ N^2_{\textup{e}}(\boldsymbol{d})\epsilon  \right)\label{ineq:avebound3}
  	\end{align}
  	It can be easily shown that
  	\begin{align}
  	\frac{\binom{K-n}{\ell}}{\binom{K}{\ell}}&=\frac{\binom{K-\ell}{n}}{\binom{K}{n}}
  	\end{align}
  	and
  	\begin{align}
  	\sum_{\ell=1}^{N_{\textup{e}}(\boldsymbol{d})}  \binom{K-\ell}{n}&=\binom{K}{n+1}-\binom{K-N_{\textup{e}}(\boldsymbol{d})}{n+1}.
  	\end{align}
  	Thus, we can rerwite (\ref{ineq:avebound3}) as
  		\begin{align}
  	R^*_\epsilon({\boldsymbol{s}(\boldsymbol{d})},\boldsymbol{\mathcal{M}}) \geq& \sum_{n=0}^{K} \frac{a_n}{NF}\cdot \frac{\binom{K}{n+1}-\binom{K-N_{\textup{e}}(\boldsymbol{d})}{n+1}}{\binom{K}{n}}\nonumber\\&-\left(\frac{1}{F}+ N^2_{\textup{e}}(\boldsymbol{d})\epsilon  \right).
  	\end{align}
  	Hence for any ${\boldsymbol{s}}\in\mathcal{S}$, by arbitrarily selecting a demand $\boldsymbol{d}\in \mathcal{D}_{\boldsymbol{s}}$ and applying the above inequality, the following bound holds for any prefetching $\boldsymbol{\mathcal{M}}$:
  	\begin{align}
  	R^*_\epsilon({\boldsymbol{s}},\boldsymbol{\mathcal{M}}) &\geq \sum_{n=0}^{K} \frac{a_n}{NF}\cdot \frac{\binom{K}{n+1}-\binom{K-N_{\textup{e}}(\boldsymbol{s})}{n+1}}{\binom{K}{n}}\nonumber\\&-\left(\frac{1}{F}+ N^2_{\textup{e}}(\boldsymbol{s})\epsilon  \right).
  	\end{align}
  	
  	After proving a lower bound of $R^*_\epsilon({\boldsymbol{s}},\boldsymbol{\mathcal{M}})$, we proceed to bound its minimum possible value over all prefetching schemes. 	Let $c_n$ denote the following sequence
  	\begin{align}
  	c_n &= \frac{\binom{K}{n+1}-\binom{K-N_{\textup{e}}(\boldsymbol{s})}{n+1}}{\binom{K}{n}}.
  	\end{align}
  	We have
  	\begin{align}
  	R^*_\epsilon({\boldsymbol{s}},\boldsymbol{\mathcal{M}}) &\geq \sum_{n=0}^{K} \frac{a_n}{NF}\cdot c_n-\left(\frac{1}{F}+ N^2_{\textup{e}}(\boldsymbol{s})\epsilon  \right).\label{ineq:ac}
  	\end{align}
  	We denote the lower convex envelope of $c_n$, i.e., the lower convex envelope of points $\{(t,c_t)\ |\ t\in\{0,1,...,K\} \}$, by $\textup{Conv}(c_t)$.
  	Note that $c_n$ is a decreasing sequence, so its lower convex envelope is  a decreasing and convex function.   
  	
  	Because the following holds for every prefetching:
  	\begin{align}
  	\sum_{n=0}^K a_n&=NF,\\
  	\sum_{n=0}^K n a_n&\leq NFt,
  	\end{align}
  	we can lower bound (\ref{ineq:ac}) using Jensen's inequality and the monotonicity of $\textup{Conv}(c_t)$: 
  	\begin{align}
  	R^*_\epsilon({\boldsymbol{s}},\boldsymbol{\mathcal{M}}) &\geq \textup{Conv}(c_t)-\left(\frac{1}{F}+ N^2_{\textup{e}}(\boldsymbol{s})\epsilon  \right).
  	\end{align}
  	Consequently, 
  	\begin{align}
  	\min_{\boldsymbol{\mathcal{M}}}R^*_\epsilon({\boldsymbol{s}},\boldsymbol{\mathcal{M}}) \geq& \min_{\boldsymbol{\mathcal{M}}} \textup{Conv}(c_t)-\left(\frac{1}{F}+ N^2_{\textup{e}}(\boldsymbol{s})\epsilon  \right)\\
  	= &\textup{Conv}(c_t)-\left(\frac{1}{F}+ N^2_{\textup{e}}(\boldsymbol{s})\epsilon  \right)\\
  	=&\textup{Conv}\left(\frac{\binom{K}{t+1}-\binom{K-N_{\textup{e}}(\boldsymbol{s})}{t+1}}{\binom{K}{t}}\right)\nonumber\\&-\left(\frac{1}{F}+ N^2_{\textup{e}}(\boldsymbol{s})\epsilon  \right).
  	\end{align}

  \end{proof}
  \section{Minimum Peak Rate for Centralized Caching}\label{app:worst}

 Consider a caching problem with $K$ users, a database of $N$ files, and a local cache size of $M$ files for each user.  We define the rate-memory tradeoff for the peak rate as follows:
  Similar to the average rate case, 
  for each prefetching $\boldsymbol{\mathcal{M}}$, let $R_{\epsilon,\textup{peak}}^*(\boldsymbol{\mathcal{M}})$ denote the peak rate, defined as 
  \begin{equation}
  R_{\epsilon,\textup{peak}}^*(\boldsymbol{\mathcal{M}})=\max_{\boldsymbol{d}} R_\epsilon^*(\boldsymbol{d},\boldsymbol{\mathcal{M}}).\nonumber
  \end{equation}	
  We aim to find the minimum peak rate $R^*_{\textup{peak}}$, where 
  \begin{equation}
  R_{\textup{peak}}^*=\sup_{\epsilon>0}\adjustlimits \limsup_{F\rightarrow+\infty}\min_{\boldsymbol{\mathcal{M}}} R_{\epsilon,\textup{peak}}^*(\boldsymbol{\mathcal{M}}),\nonumber
  \end{equation}
  which is a function of $N$, $K$, and $M$.

  Now we prove Corollary \ref{cor:worst}, which completely characterizes the value of $R_{\textup{peak}}^*$.

  	\begin{proof}
  		It is easy to show that the rate stated in Corollary \ref{cor:worst} can be exactly achieved using the caching scheme introduced in Section \ref{sec:opt}. Hence, we focus on proving the optimality of the proposed coding scheme. 

  		Recall the definitions of statistics and types (see section \ref{sec:conv}). Given a prefetching $\boldsymbol{\mathcal{M}}$ and statistics $\boldsymbol{s}$, we define the peak rate within type $\mathcal{D}_{\boldsymbol{s}}$, denoted by $R_{\epsilon,\textup{peak}}^*(\boldsymbol{s},\boldsymbol{\mathcal{M}})$, as 
  		\begin{equation}
  		R_{\epsilon,\textup{peak}}^*(\boldsymbol{s},\boldsymbol{\mathcal{M}})=\max_{\boldsymbol{d}\in \mathcal{D}_{\boldsymbol{s}}} R_{\epsilon}^*(\boldsymbol{d},\boldsymbol{\mathcal{M}}).
  		\end{equation}	
  		
  		Note that 
  		\begin{align}
  		R^*_{\textup{peak}}&=\sup_{\epsilon>0}\adjustlimits \limsup_{F\rightarrow+\infty}\min_{{\boldsymbol{\mathcal{M}}}} \max_{\boldsymbol{s}} R^*_{\epsilon,\textup{peak}}({\boldsymbol{s}}, \boldsymbol{\mathcal{M}}) \\
  		&\geq \sup_{\epsilon>0}\adjustlimits \limsup_{F\rightarrow+\infty}\max_{\boldsymbol{s}}\min_{{\boldsymbol{\mathcal{M}}}} R^*_{\epsilon,\textup{peak}}({\boldsymbol{s}}, \boldsymbol{\mathcal{M}}).\label{ineq:54}
  		\end{align}
  		Hence, in order to lower bound $R^*$, 
  		it is sufficient to bound the minimum value of $R^*_{\epsilon,\textup{peak}}({\boldsymbol{s}}, \boldsymbol{\mathcal{M}})$ for each type $\mathcal{D}_{\boldsymbol{s}}$ individually. 
  		Using Lemma \ref{lemma:univ}, the following bound holds for each $s\in\mathcal{S}$:
  		\begin{align}
  	\min_{{\boldsymbol{\mathcal{M}}}}	R^*_{\epsilon,\textup{peak}}({\boldsymbol{s}}, \boldsymbol{\mathcal{M}})\geq&
  	\min_{{\boldsymbol{\mathcal{M}}}}	R_{\epsilon}^*({\boldsymbol{s}}, \boldsymbol{\mathcal{M}})\\\geq& \textup{Conv}\left( \frac{\binom{K}{t+1}-\binom{K-N_{\textup{e}}(\boldsymbol{s})}{t+1}}{\binom{K}{t}}\right)\nonumber\\&-\left(\frac{1}{F}+ N^2_{\textup{e}}(\boldsymbol{s})\epsilon  \right)\label{ineq:typ-worst}.
  		\end{align}
  		
  		Consequently, 
  		\begin{align}
  		R^*_{\textup{peak}}\geq& \sup_{\epsilon>0}\adjustlimits \limsup_{F\rightarrow+\infty}
  		 \max_{\boldsymbol{s}}\textup{Conv}\left( \frac{\binom{K}{t+1}-\binom{K-N_{\textup{e}}(\boldsymbol{s})}{t+1}}{\binom{K}{t}}\right)\nonumber\\&-\left(\frac{1}{F}+ N^2_{\textup{e}}(\boldsymbol{s})\epsilon  \right)\\
  		 =&\textup{Conv}\left( \frac{\binom{K}{t+1}-\binom{K-\min\{N,K\}}{t+1}}{\binom{K}{t}}\right).
  		\end{align}  			
  	\end{proof}
  	
  	 \begin{remark}[Universal Optimality of Symmetric Batch Prefetching - Peak Rate] \label{remark:peakuniv}
  	 	Inequality (\ref{ineq:typ-worst}) characterizes the minimum peak rate given a type $\mathcal{D}_{\boldsymbol{s}}$, if the prefetching $\boldsymbol{\mathcal{M}}$ can be designed based on $\boldsymbol{s}$. However,
  	 	for (\ref{ineq:54}) to be tight, the peak rate for each different type has to be minimized on the same prefetching. Surprisingly, such an optimal prefetching exists, an example being the symmetric batch prefetching, according to Section \ref{sec:opt}. This indicates that the symmetric batch prefetching is also universally optimal for all types in terms of peak rates.
  	 \end{remark}
  	
  \section{Proof of Theorem \ref{thm:ave-dec}}\label{app:decent-ave}
  
  To completely characterize $\mathcal{R}$, we propose decentralized caching schemes to achieve all points in $\mathcal{R}$. We also prove a matching information-theoretic outer
  		bound of the achievable regions, which implies that none of the points outside $\mathcal{R}$ are achievable.

\subsection{The Optimal Decentralized Caching Scheme}\label{app:decent-opt}

To prove the achievability of $\mathcal{R}$, we need to provide an optimal decentralized prefetching scheme ${P}_{\mathcal{M};F}$, an optimal delivery scheme for every possible user demand $\boldsymbol{d}$ that achieves the corner point in $\mathcal{R}$, and a valid decoding algorithm for the users. The main idea of our proposed achievability scheme is to first design a decentralized prefetching scheme, such that we can view the resulting content delivery problem as a list of sub-problems that can be individually solved using the techniques we already developed for the centralized setting. Then we optimally solve this delivery problem by greedily applying our proposed centralized delivery and decoding scheme.

We consider the following optimal prefetching scheme: all users cache $\frac{MF}{N}$ bits in each file uniformly and independently. This prefetching scheme was originally proposed in \cite{maddah-ali13}. For convenience, we refer to this prefetching scheme as \textit{uniformly random prefetching scheme}.
Given this prefetching scheme, each bit in the database is cached by a random subset of the $K$ users. 

During the delivery phase, we first greedily categorize all the bits based on the number of users that cache the bit, then within each category, we deliver the corresponding messages in an opportunistic way using the delivery scheme described in Section \ref{sec:opt} for centralized caching.
 For any demand $\boldsymbol{d}$ where $K$ users are making requests, and any realization of the prefetching on these $K$ users, we divide the bits in the database into $K+1$ sets:
 For each $j\in\{0,1,...,K\}$, let $\mathcal{B}_j$ denote the bits that are cached by exactly $j$ users. To deliver the requested files to the $K$ users, it is sufficient to deliver all the corresponding bits in each $\mathcal{B}_j$ individually.

 Within each $\mathcal{B}_j$, first note that with high probability for large $F$, the number of bits that belong to each file is approximately  $\binom{K}{j}(\frac{M}{N})^j(1-\frac{M}{N})^{K-j}F+o(F)$, which is the same across all files. Furthermore,  for any subset $\mathcal{K}\subseteq\{1,...,K\}$ of size $j$, a total of $(\frac{M}{N})^j(1-\frac{M}{N})^{K-j}F+o(F)$ bits in file $i$ are exclusively cached by users in $\mathcal{K}$, which is $1/\binom{K}{j}$ fraction of the bits in $\mathcal{B}_j$ that belong to file $i$. This is effectively the symmetric batch prefetching, and hence we can directly apply the same delivery and decoding scheme to deliver all the requested bits within this subset. 
 
 Recall that in the centralized setting, when each file has a size $F$ and each bit is cached by exactly $t$ users, our proposed delivery scheme achieves a communication load of $\frac{\binom{K}{t+1}-\binom{K-N_{\textup{e}}(\boldsymbol{d})}{t+1}}{\binom{K}{t}} F$. Then to deliver all requested bits within $\mathcal{B}_j$, where the equivalent file size approximately equals $\binom{K}{j}(\frac{M}{N})^j(1-\frac{M}{N})^{K-j}F$, we need a communication rate of $(\frac{M}{N})^j(1-\frac{M}{N})^{K-j} \left(\binom{K}{j+1}-\binom{K-N_{\textup{e}}(\boldsymbol{d})}{j+1}\right)$.
 
 Consequently, by applying the delivery scheme for all $j\in\{0,1,...,K\}$, we achieve a total communication rate of 
 \begin{align}
 R_K=&\sum_{j=0}^{K}\left(\frac{M}{N}\right)^j\left(1-\frac{M}{N}\right)^{K-j} \nonumber\\&\ \ \ \ \ \cdot\left(\binom{K}{j+1}-\binom{K-N_{\textup{e}}(\boldsymbol{d})}{j+1}\right)\\
 =&\frac{N-M}{M}\left(1-\left(1-\frac{M}{N}\right)^{N_{\textup{e}}(\boldsymbol{d})}\right)
 \end{align}
 for any demand $\boldsymbol{d}$.
 Hence, for each $K$ we achieve an average rate of $\mathbb{E}[\frac{N-M}{M}(1-\left(1-\frac{M}{N}\right)^{N_{\textup{e}}(\boldsymbol{d})})]$, which dominates all points in $\mathcal{R}$. This provides a tight inner bound for Theorem \ref{thm:ave-dec}.
  
\subsection{Converse}
    To prove an outer bound of $\mathcal{R}$, i.e., bounding all possible rate vectors $\{R_K\}_{K\in\mathbb{N}}$ that can be achieved by a prefetching scheme, it is sufficient to bound each entry of the vector individually, by providing a lower bound of $R^*_K(P_{{\mathcal{M}};F})$ that holds for all prefetching schemes. To obtain such a lower bound, for each $K\in\mathbb{N}$ we divide the set of all possible demands into types, and  derive the minimum average rate within each type separately.      	  	
    
   	For any statistics $\boldsymbol{s}$, we let $R^*_{\epsilon,K}(\boldsymbol{s}, P_{{\mathcal{M}}})$ denote the average rate within type $\mathcal{D}_{\boldsymbol{s}}$. Rigorously, 
   	\begin{align}
   R^*_{\epsilon,K}(\boldsymbol{s}, P_{{\mathcal{M}}})&=\frac{1}{|\mathcal{D}_{\boldsymbol{s}}|}\sum_{\boldsymbol{d}\in \mathcal{D}_{\boldsymbol{s}}}R^*_{\epsilon, K}(\boldsymbol{d},P_{{\mathcal{M}}}).
   	\end{align}
   	The minimum value of $R^*_{\epsilon,K}(\boldsymbol{s}, P_{{\mathcal{M}}})$ is lower bounded by the following lemma:
   \begin{lemma}\label{lemma:univ-dec}
   	Consider a decentralized caching problem with $N$ files and a local cache size of $M$ files for each user. For any type $\mathcal{D}_{\boldsymbol{s}}$, where $K$ users are making requests, the minimum value of $R^*_{\epsilon, K}(\boldsymbol{s}, P_{{\mathcal{M}}})$ is lower bounded by
   	\begin{align}
   	\min_{P_{{\mathcal{M}}}}	
   	R^*_{\epsilon,K}(\boldsymbol{s}, P_{{\mathcal{M}}})\geq& \frac{M-N}{M}\left(1-\left(1-\frac{M}{N}\right)^{N_{\textup{e}}(s)}\right)\nonumber\\&-\left(\frac{1}{F}+N_{\textup{e}}^2 (\boldsymbol{s})\epsilon\right).
   	\end{align}
   \end{lemma}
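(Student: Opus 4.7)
The plan is to lift the in-type lower bound of Lemma \ref{lemma:univ} to the decentralized setting by taking expectations over the i.i.d.\ prefetching distribution $P_{\mathcal{M}}$, and then to exploit the resulting binomial structure via Jensen's inequality. The key realization is that, while the per-realization bound already involves the histogram of subset sizes $a_n(\boldsymbol{\mathcal{M}})$, the i.i.d.\ caching structure collapses this histogram into a Binomial expression whose weighted sum against the coefficients $c_n$ admits a very clean closed form.

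First, I would apply the centralized derivation from the proof of Lemma \ref{lemma:univ}, which for any realization $\boldsymbol{\mathcal{M}}$ and any demand $\boldsymbol{d}\in\mathcal{D}_{\boldsymbol{s}}$ gives the cut-set style inequality $R^*_{\epsilon_{\mathcal{M}}}(\boldsymbol{s},\boldsymbol{\mathcal{M}}) \geq \sum_{n=0}^{K} \frac{a_n(\boldsymbol{\mathcal{M}})}{NF}\,c_n - \bigl(\tfrac{1}{F}+N_e^2(\boldsymbol{s})\epsilon_{\mathcal{M}}\bigr)$, where $c_n = \bigl(\binom{K}{n+1}-\binom{K-N_e(\boldsymbol{s})}{n+1}\bigr)/\binom{K}{n}$. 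By the decentralized definition of $\epsilon$-achievability, for a rate $R$ to be attained we need realization-wise errors $\epsilon_{\mathcal{M}}$ with $\mathbb{E}[\epsilon_{\mathcal{M}}]\leq\epsilon$, so averaging the inequality over $\boldsymbol{\mathcal{M}}\sim P_{\mathcal{M}}$ (and over $\boldsymbol{d}\in\mathcal{D}_{\boldsymbol{s}}$) yields $R^*_{\epsilon,K}(\boldsymbol{s},P_{\mathcal{M}})\geq \sum_n \tfrac{\mathbb{E}[a_n]}{NF}\,c_n - \bigl(\tfrac{1}{F}+N_e^2(\boldsymbol{s})\epsilon\bigr)$.

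Next, I would use the i.i.d.\ structure of the decentralized prefetching: if $p_{i,j}$ is the single-user probability of caching bit $B_{i,j}$ under $P_{\mathcal{M}}$, then the number of users caching $B_{i,j}$ is Binomial$(K,p_{i,j})$, hence $\mathbb{E}[a_n]=\sum_{(i,j)}\binom{K}{n}p_{i,j}^n(1-p_{i,j})^{K-n}$. Swapping sums, the dominant term becomes $\tfrac{1}{NF}\sum_{(i,j)} g(p_{i,j})$ with $g(p):=\sum_n c_n \binom{K}{n} p^n(1-p)^{K-n}$. Using $c_n\binom{K}{n}=\binom{K}{n+1}-\binom{K-N_e(\boldsymbol{s})}{n+1}$ and the binomial theorem (via the substitution $m=n+1$ on each piece), $g$ simplifies to $g(p)=\tfrac{1-p}{p}\bigl(1-(1-p)^{N_e(\boldsymbol{s})}\bigr)=\sum_{k=1}^{N_e(\boldsymbol{s})}(1-p)^k$, which is manifestly convex and decreasing on $[0,1]$.

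To close the argument, I would invoke Jensen on the convex $g$ to get $\tfrac{1}{NF}\sum_{(i,j)}g(p_{i,j})\geq g(\bar{p})$, where $\bar{p}:=\tfrac{1}{NF}\sum_{(i,j)}p_{i,j}$, then use the memory constraint $|\mathcal{M}_k|\leq MF$ (equivalently $\sum_{(i,j)}p_{i,j}\leq MF$, so $\bar{p}\leq M/N$) combined with the monotonicity of $g$ to get $g(\bar{p})\geq g(M/N)=\tfrac{N-M}{M}\bigl(1-\bigl(\tfrac{N-M}{N}\bigr)^{N_e(\boldsymbol{s})}\bigr)$. This matches the right-hand side of the lemma up to the slack $\tfrac{1}{F}+N_e^2(\boldsymbol{s})\epsilon$.

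The main obstacle is the identification of the closed form $g(p)=\sum_{k=1}^{N_e(\boldsymbol{s})}(1-p)^k$: without this telescoping identity, Jensen would still apply but would not produce the geometric-series expression that appears in the target, and the convexity/monotonicity needed to take $\bar{p}\mapsto M/N$ would not be obvious. A secondary but routine subtlety is bookkeeping the realization-wise errors $\epsilon_{\mathcal{M}}$ so that, after averaging over $\boldsymbol{\mathcal{M}}$ and $\boldsymbol{d}\in\mathcal{D}_{\boldsymbol{s}}$, the residual error term remains $O(\tfrac{1}{F}+N_e^2(\boldsymbol{s})\epsilon)$ and matches the lemma's statement.
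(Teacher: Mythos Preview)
Your argument is correct, but it follows a genuinely different route from the paper's own proof. The paper does not exploit the binomial structure of i.i.d.\ caching at all. Instead, it introduces \emph{generalized demands} (allowing some users to be inactive), observes that adding inactive users to a decentralized system cannot change $R^*_{\epsilon,K}(\boldsymbol{s},P_{\mathcal{M}})$ because the extra users' caches are drawn independently and never enter the decoding, and then applies the centralized bound of Lemma~\ref{lemma:univ} (more precisely its extension, Lemma~\ref{lemma:univ-gen}) as a black box for each $K$. Since $R^*_{\epsilon,K}(\boldsymbol{s},P_{\mathcal{M}})=R^*_{\epsilon,K_{\boldsymbol{s}}}(\boldsymbol{s},P_{\mathcal{M}})$ for every $K\geq K_{\boldsymbol{s}}$, the paper simply lets $K\to\infty$ in the expression $\textup{Conv}\bigl(\tfrac{\binom{K}{t+1}-\binom{K-N_e(\boldsymbol{s})}{t+1}}{\binom{K}{t}}\bigr)$ with $t=KM/N$, and the limit is exactly $\tfrac{N-M}{M}\bigl(1-(1-M/N)^{N_e(\boldsymbol{s})}\bigr)$.

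Your approach trades the embedding/limit idea for a direct computation: you reopen the proof of Lemma~\ref{lemma:univ} to extract the pre-Jensen histogram bound, push the expectation over $P_{\mathcal{M}}$ through it, and then discover that the resulting binomial average $g(p)=\sum_{n}c_n\binom{K}{n}p^n(1-p)^{K-n}$ collapses to the $K$-free expression $\sum_{k=1}^{N_e(\boldsymbol{s})}(1-p)^k$, after which convexity and the cache budget finish the job. This is arguably more self-contained---no generalized demands, no limit in $K$---and it makes transparent both why the bound is independent of $K$ and why uniform per-bit caching $p_{i,j}\equiv M/N$ is the extremizer. The paper's route, on the other hand, never needs that binomial identity and packages the argument so that Lemma~\ref{lemma:univ} is reused verbatim; it also directly supports the interpretation (Remark~\ref{remark:fund-ave}) that the decentralized bound is the large-population limit of the centralized one.
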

   
      		 \begin{remark}
      		 	As proved in Appendix \ref{app:decent-opt}, the rate $R^*_{\epsilon,K}(\boldsymbol{s}, P_{{\mathcal{M}}})$ for any statistics $\boldsymbol{s}$ and any $K$ can be simultaneously minimized using the uniformly random prefetching scheme. This demonstrates that the uniformly random prefetching scheme is universally optimal for the decentralized caching problem in terms of average rates. 
      		 \end{remark}
   \begin{proof}
   	To prove Lemma \ref{lemma:univ-dec}, we first consider a class of \emph{generalized demands}, where not all users in the caching systems are required to request a file. We define generalized demand $\boldsymbol{d}=(d_1,...,d_K)\in \{0,1,...,N\}^K$, where
   a nonzero $d_k$ denotes the index of the file requested by $k$, while $d_k=0$ indicates that user $k$ is not making a request. We define statistics and their corresponding types in the same way, and let $R^*_{\epsilon,K}(\boldsymbol{s} ,\boldsymbol{\mathcal{M}})$ denote the centralized average rate on a generalized type $\mathcal{D}_{\boldsymbol{s}}$ given prefetching $\boldsymbol{\mathcal{M}}$. 
   	
   	For a centralized caching problem, we can easily generalize Lemma \ref{lemma:univ} to the following lemma for the generalized demands: 
   		 \begin{lemma}\label{lemma:univ-gen}
   		 	Consider a caching problem with $N$ files, $K$ users, and a local cache size of $M$ files for each user. For any generalized type $\mathcal{D}_{\boldsymbol{s}}$, the minimum value of $R^*_{\epsilon,K}(\boldsymbol{s}, \boldsymbol{\mathcal{M}})$ is lower bounded by
   		 	\begin{align}
   		 	\min_{\boldsymbol{\mathcal{M}}}	
   		 	R^*_{\epsilon,K}(\boldsymbol{s} ,\boldsymbol{\mathcal{M}})\geq& \textup{Conv}\left( \frac{\binom{K}{t+1}-\binom{K-N_{\textup{e}}(\boldsymbol{s})}{t+1}}{\binom{K}{t}}\right)\nonumber\\&-\left(\frac{1}{F}+N_{\textup{e}}^2 (\boldsymbol{s})\epsilon\right),
   		 	\end{align}
   		 	where $\textup{Conv}(f(t))$ denotes the lower convex envelope of the following points: $\{(t,f(t))\ | \ t\in\{0,1,...,K\}\}$.
   		 \end{lemma}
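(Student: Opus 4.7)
The plan is to follow the proof of Lemma \ref{lemma:univ} essentially verbatim, with minor adaptations for the generalized demand model in which some users may have $d_k=0$. The key observation is that the genie-aided argument only imposes decoding constraints on users who actually request a file, so inactive users simply contribute cache memory without a corresponding decoding obligation -- which is already accounted for by the global memory budget $\sum_n n\,a_n \leq NFt$.

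First, for each generalized demand $\boldsymbol{d}$ I would pick a leader set $\mathcal{U}=\{u_1,\ldots,u_{N_{\textup{e}}(\boldsymbol{d})}\}$ consisting of active users requesting pairwise distinct files, and construct the same virtual user as before: a genie sequentially feeds it the bits cached by $u_\ell$ that lie outside the files requested by $u_1,\ldots,u_{\ell-1}$. Induction on $\ell$ via the decoders of the $u_\ell$ shows the virtual user recovers all $N_{\textup{e}}(\boldsymbol{d})$ requested files with probability at least $1-N_{\textup{e}}(\boldsymbol{d})\epsilon$, so Fano's inequality yields a per-demand cut-set bound of the same form as in Lemma \ref{lemma:univ}, written in terms of $\mathbbm{1}(\mathcal{K}_{d_{u_\ell},j}\cap\{u_1,\ldots,u_\ell\}=\emptyset)$ with $\mathcal{K}_{i,j}$ denoting the set of users caching $B_{i,j}$.

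Next, I would symmetrize by averaging over $(p_1,p_2)\in\mathcal{P}_K\times\mathcal{P}_N$. Here $p_1$ relabels users (active or not) while $p_2$ permutes only the $N$ file indices (leaving the $0$ entries fixed), so the orbits of this group action stay inside the generalized type $\mathcal{D}_{\boldsymbol{s}}$ and cover it uniformly. The combinatorial identities used in Lemma \ref{lemma:univ} -- namely $\frac{1}{K!}\sum_{p_1}\mathbbm{1}(\mathcal{K}\cap p_1(\cdot)=\emptyset)=\binom{K-|\mathcal{K}|}{\ell}/\binom{K}{\ell}$, the transposition $\binom{K-n}{\ell}/\binom{K}{\ell}=\binom{K-\ell}{n}/\binom{K}{n}$, and $\sum_{\ell=1}^{N_{\textup{e}}(\boldsymbol{s})}\binom{K-\ell}{n}=\binom{K}{n+1}-\binom{K-N_{\textup{e}}(\boldsymbol{s})}{n+1}$ -- are purely combinatorial in $K$, $N_{\textup{e}}(\boldsymbol{s})$, and the cache-load profile $a_n$, so they go through unchanged. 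This gives a type-level bound of the form $\sum_n \frac{a_n}{NF}c_n - \bigl(\frac{1}{F}+N_{\textup{e}}^2(\boldsymbol{s})\epsilon\bigr)$ with $c_n=\frac{\binom{K}{n+1}-\binom{K-N_{\textup{e}}(\boldsymbol{s})}{n+1}}{\binom{K}{n}}$. Minimizing over $\boldsymbol{\mathcal{M}}$ via Jensen's inequality applied to the decreasing, convex lower envelope of $c_n$, under the constraints $\sum_n a_n=NF$ and $\sum_n n\,a_n\leq NFt$, produces the claimed bound.

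The main obstacle, and it is a mild one, is verifying that nothing in the original proof implicitly relied on every user being active. The three points to check are: (a)~the leader set is still well-defined and of size $N_{\textup{e}}(\boldsymbol{s})$, which holds whenever the type has at least one requesting user; (b)~the file-permutation averaging acts only on the nonzero entries, which is exactly the set whose cardinality $N_{\textup{e}}$ we are using; and (c)~the ambient $K$ in the binomial denominators corresponds to the total number of users -- active plus inactive -- which is precisely the set on which $p_1$ permutes. All three points are routine, so no new ideas beyond those already developed for Lemma \ref{lemma:univ} should be required.
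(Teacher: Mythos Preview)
Your proposal is correct and matches the paper's approach: the paper itself simply states that Lemma~\ref{lemma:univ-gen} ``can be proved exactly the same way as we proved Lemma~\ref{lemma:univ},'' without giving further details. Your identification of the three routine checkpoints (leader set still well-defined among active users, file permutation fixing the $0$ entries, and the ambient $K$ counting all users) is a faithful unpacking of why the original argument transfers verbatim.
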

   	  The above lemma can be proved exactly the same way as we proved Lemma \ref{lemma:univ}, and the universal optimality of symmetric batch prefetching still holds for the generalized demands.

   	  For a decentralized caching problem, we can also generalize the definition of $R^*_{\epsilon,K}(\boldsymbol{s}, P_{{\mathcal{M}}})$ correspondingly. 
   	  We can easily prove that, when a decentralized caching scheme is used, the expected value of $R^*_{\epsilon,K}(\boldsymbol{s}, \boldsymbol{\mathcal{M}})$ is no greater than $R^*_{\epsilon,K}(\boldsymbol{s}, P_{{\mathcal{M}}})$. Consequently, 
   	  \begin{align}
   	  R^*_{\epsilon,K}(\boldsymbol{s}, P_{{\mathcal{M}}})\geq& \mathbb{E}_{\boldsymbol{\mathcal{M}}}[R^*_{\epsilon,K}(\boldsymbol{s}, \boldsymbol{\mathcal{M}})]\\
   	  \geq& \textup{Conv}\left( \frac{\binom{K}{t+1}-\binom{K-N_{\textup{e}}(\boldsymbol{s})}{t+1}}{\binom{K}{t}}\right)\nonumber\\&-\left(\frac{1}{F}+N_{\textup{e}}^2 (\boldsymbol{s})\epsilon\right),
   	  \end{align}
   	  for any generalized type $\mathcal{D}_{\boldsymbol{s}}$ and for any $P_{{\mathcal{M}}}$.
   	  
   	  Now we prove that value $R^*_{\epsilon,K}(\boldsymbol{s}, P_{{\mathcal{M}}})$ is independent of parameter $K$ given $\boldsymbol{s}$ and $ P_{{\mathcal{M}}}$:
   	  Consider a generalized statistic $\boldsymbol{s}$. Let $K_{\boldsymbol{s}}=\sum\limits_{i=1}^{N} s_i$, which equals the number of active users for demands in $\mathcal{D}_{\boldsymbol{s}}$. For any caching system with $K>K_{\boldsymbol{s}}$ users, and for any subset $\mathcal{K}$ of $K_{\boldsymbol{s}}$ users, let $\mathcal{D}_{\mathcal{K}}$ denote the set of demands in $\mathcal{D}_{\boldsymbol{s}}$ where only users in $\mathcal{K}$ are making requests.
   	  Note that $\mathcal{D}_{\boldsymbol{s}}$ equals the union of disjoint sets  $\mathcal{D}_{\mathcal{K}}$ for all subsets $\mathcal{K}$ of size $K_{\boldsymbol{s}}$. Thus we have,
   	  \begin{align}
   	  R^*_{\epsilon,K}(\boldsymbol{s}, P_{{\mathcal{M}}})&=\frac{1}{|\mathcal{D}_{\boldsymbol{s}}|} \sum_{\boldsymbol{d}\in\mathcal{D}_{\boldsymbol{s}}} R^*_{\epsilon,K}(\boldsymbol{d}, P_{{\mathcal{M}}}) \\
   	  &=\frac{1}{|\mathcal{D}_{\boldsymbol{s}}|} \sum_{\mathcal{K}:|\mathcal{K}|=K_{\boldsymbol{s}}} \sum_{\boldsymbol{d}\in\mathcal{D}_{\mathcal{K}}} R^*_{\epsilon,K}(\boldsymbol{d}, P_{{\mathcal{M}}})\\
   	  &=\frac{1}{|\mathcal{D}_{\boldsymbol{s}}|} \sum_{\mathcal{K}:|\mathcal{K}|=K_{\boldsymbol{s}}}
   	  |\mathcal{D}_{\mathcal{K}}| R^*_{\epsilon,K_{\boldsymbol{s}}}(\boldsymbol{s}, P_{{\mathcal{M}}})\\
   	  &=R^*_{\epsilon,K_{\boldsymbol{s}}}(\boldsymbol{s}, P_{{\mathcal{M}}}).
   	  \end{align}
   	  
   	  Consequently,
   	  \begin{align}
   	  R^*_{\epsilon,K_{\boldsymbol{s}}}(\boldsymbol{s}, P_{{\mathcal{M}}})=&\lim_{K\rightarrow+\infty} R^*_{\epsilon,K}(\boldsymbol{s}, P_{{\mathcal{M}}})\\
   	  \geq& \lim_{K\rightarrow+\infty} \textup{Conv}\left( \frac{\binom{K}{t+1}-\binom{K-N_{\textup{e}}(\boldsymbol{s})}{t+1}}{\binom{K}{t}}\right)\nonumber\\&-\left(\frac{1}{F}+N_{\textup{e}}^2 (\boldsymbol{s})\epsilon\right)\\
   	  =&\frac{M-N}{M}\left(1-\left(1-\frac{M}{N}\right)^{N_{\textup{e}}(s)}\right)\nonumber\\&-\left(\frac{1}{F}+N_{\textup{e}}^2 (\boldsymbol{s})\epsilon\right).
   	  \end{align}   	  
   	Because the above lower bound is independent of the prfetching distribution $P_{\mathcal{\boldsymbol{M}}}$, the minimum value of  $R^*_{\epsilon,K_{\boldsymbol{s}}}(\boldsymbol{s}, P_{{\mathcal{M}}})$ over all possible prefetchings is also bounded by the same formula. This completes the proof of Lemma \ref{lemma:univ-dec}.
   \end{proof}
   
   From Lemma \ref{lemma:univ-dec}, the following bound holds by definition
   \begin{align}
   R^*_K(P_{{\mathcal{M}};F})&=\sup_{\epsilon>0} \limsup_{F'\rightarrow+\infty}\mathbb{E}_{\boldsymbol{s}} [R^*_{\epsilon,K}(\boldsymbol{s}, P_{{\mathcal{M}};F}(F=F'))]\\&\geq \mathbb{E}_{\boldsymbol{d}} \left[\frac{M-N}{M}\left(1-\left(1-\frac{M}{N}\right)^{N_{\textup{e}}(\boldsymbol{d})}\right)\right]
   \end{align}
   for any $K\in \mathbb{N}$ and for any prefetching scheme $P_{{\mathcal{M}};F}$. Consequently, any vector $\{R_K\}_{K\in\mathbb{N}}$ in $\mathcal{R}$ satisfies
   \begin{align}
   R_K&\geq \min_{P_{{\mathcal{M}};F}}R^*_K(P_{{\mathcal{M}};F})\\&\geq 
   \mathbb{E}_{\boldsymbol{d}} \left[\frac{M-N}{M}\left(1-\left(1-\frac{M}{N}\right)^{N_{\textup{e}}(\boldsymbol{d})}\right)\right],
   \end{align}
   for any $K\in \mathbb{N}$.
   Hence,
 \begin{align}
			\mathcal{R}\subseteq&\left\{\{R_K\}_{K\in\mathbb{N}}\ \vphantom{\left.R_K\geq \mathbb{E}_{\boldsymbol{d}}\left[\frac{N-M}{M} \left(1-\left(\frac{N-M}{N}\right)^{N_{\textup{e}}(\boldsymbol{d})}\right)\right]\right\}} \right|\nonumber\\
			& \ \  \left.R_K\geq \mathbb{E}_{\boldsymbol{d}}\left[\frac{N-M}{M} \left(1-\left(\frac{N-M}{N}\right)^{N_{\textup{e}}(\boldsymbol{d})}\right)\right]\right\}.
			\end{align}

  \section{Proof of Corollary \ref{cor:dec}}\label{app:decent-worst}
  \begin{proof}
  It is easy to show that all points in $\mathcal{R}_{\textup{peak}}$ can be achieved using the decentralized caching scheme introduced in Appendix \ref{app:decent-opt}. Hence, we focus on proving the optimality of the proposed decentralized caching scheme. 
   Similar to the average rate case, we prove an outer bound of $\mathcal{R}_{\textup{peak}}$ by bounding  $R^*_{K,\textup{peak}}(P_{{\mathcal{M}};F})$ for each $K\in\mathbb{N}$ individually. To do so, we divide the set of all possible demands into types, and derive the minimum average rate within each type separately.

  	Recall the definitions of statistics and types (see section \ref{sec:conv}). Given a caching system with $N$ files, $K$ users, a prefetching distribution $P_{{\mathcal{M}}}$, and a statistic $\boldsymbol{s}$, we define the peak rate within type $\mathcal{D}_{\boldsymbol{s}}$, denoted by $R_{\epsilon, K,\textup{peak}}^*(\boldsymbol{s},P_{{\mathcal{M}}})$, as 
  	\begin{equation}
  	R_{\epsilon,K,\textup{peak}}^*(\boldsymbol{s},P_{{\mathcal{M}}})=\max_{\boldsymbol{d}\in \mathcal{D}_{\boldsymbol{s}}} R^*_{\epsilon,K}(\boldsymbol{d},P_{{\mathcal{M}}}).
  	\end{equation}	
  	
  	Note that any point $\{R_K\}_{K\in\mathbb{N}}$ in $\mathcal{R}_{\textup{peak}}$ satisfies 
  	\begin{align}
  	R_K&\geq \inf_{P_{{\mathcal{M}};F}} R^*_{K,\textup{peak}}(P_{{\mathcal{M}};F})\\&= \inf_{P_{{\mathcal{M}};F}} \sup_{\epsilon>0}\adjustlimits \limsup_{F'\rightarrow+\infty}\max_{\boldsymbol{s}\in \mathcal{D}} [R^*_{\epsilon,K,{\textup{peak}}}(\boldsymbol{s}, P_{{\mathcal{M};F}}(F=F'))]
  	\end{align}
  	for any $K\in \mathbb{N}$. 
  	We have the following from min-max inequality
  	\begin{align}
  	    	R_K&\geq \sup_{\epsilon>0}\adjustlimits \limsup_{F\rightarrow+\infty}\max_{\boldsymbol{s}\in \mathcal{D}} [\min_{P_{{\mathcal{M}}}}  R^*_{\epsilon,K,{\textup{peak}}}(\boldsymbol{s}, P_{{\mathcal{M}}})].\label{ineq:78}
  	\end{align}
  	Hence, in order to outer bound $\mathcal{R}_{\textup{peak}}$, 
  	it is sufficient to bound the minimum value of $R_{\epsilon,K,\textup{peak}}^*(\boldsymbol{s},P_{{\mathcal{M}}})$ for each type $\mathcal{D}_{\boldsymbol{s}}$ individually. 
  	
  	Using Lemma \ref{lemma:univ-dec}, the following bound holds for each $s\in\mathcal{S}$:
  	\begin{align}
  	\min_{P_{{\mathcal{M}}}} R_{\epsilon,K,\textup{peak}}^*(\boldsymbol{s},P_{{\mathcal{M}}})\geq&
  		\min_{P_{{\mathcal{M}}}} R_{\epsilon,K}^*(\boldsymbol{s},P_{{\mathcal{M}}})\\\geq& \frac{M-N}{M}\left(1-\left(1-\frac{M}{N}\right)^{N_{\textup{e}}(s)}\right) \nonumber\\&-\left(\frac{1}{F}+N_{\textup{e}}^2 (\boldsymbol{s})\epsilon\right) \label{ineq:typ-dec-worst}.
  	\end{align}
  	Hence for any $\{R_K\}_{K\in\mathbb{N}}$,
  	\begin{align}
  	R_K\geq& \sup_{\epsilon>0}\adjustlimits \limsup_{F\rightarrow+\infty}\max_{\boldsymbol{s}}\left[ \frac{M-N}{M}\left(1-\left(1-\frac{M}{N}\right)^{N_{\textup{e}}(s)}\right) \right.\nonumber\\&-\left.\left(\frac{1}{F}+N_{\textup{e}}^2 (\boldsymbol{s})\epsilon\right) \vphantom{\left(1-\frac{M}{N}\right)^{N_{\textup{e}}(s)}}\right]\\=&\frac{M-N}{M}\left(1-\left(1-\frac{M}{N}\right)^{\min\{N,K\}}\right). 
  	\end{align}
  	Consequently, 
  	\begin{align}
  	\mathcal{R}_{\textup{peak}}\subseteq&\left\{\vphantom{\left(1-\frac{M}{N}\right)^{N_{\textup{e}}(s)}}\{R_K\}_{K\in\mathbb{N}}\ \right|\nonumber\\
  	&\ \  \left. R_K\geq \frac{N-M}{M} \left(1-\left(\frac{N-M}{N}\right)^{\min\{N,K\}}\right)\right\}.
  	\end{align}  			
  \end{proof}
  
  		 \begin{remark}
      		 	According to the above discussion, the rate $R^*_{\epsilon,K,\textup{peak}}(\boldsymbol{s}, P_{{\mathcal{M}}})$ for any statistics $\boldsymbol{s}$ and any $K$ can be simultaneously minimized using the uniformly random prefetching scheme. This indicates that the uniformly random prefetching scheme is universally optimal for all types in terms of peak rates.
  \end{remark}

\bibliographystyle{ieeetr}
\bibliography{uache_checked}

\section*{Biographies}

\begin{IEEEbiographynophoto}{Qian Yu} (S'16) is pursuing his Ph.D. degree in Electrical Engineering at University of Southern California (USC), Viterbi School of Engineering. He received his M.Eng. degree in Electrical Engineering and B.S. degree in EECS and Physics, both from Massachusetts Institute of Technology (MIT). His interests span information theory, distributed computing, and many other problems math-related.

Qian received the Jack Keil Wolf ISIT Student Paper Award in 2017.
He is also a Qualcomm Innovation Fellowship finalist in 2017, and received the Annenberg Graduate Fellowship in 2015. 
\end{IEEEbiographynophoto}

\begin{IEEEbiographynophoto}{Mohammad Ali Maddah-Ali} (S'03-M'08) received the B.Sc. degree from Isfahan University of Technology, and the M.A.Sc. degree from the University of Tehran, both in electrical engineering. From 2002 to 2007, he was with the Coding and Signal Transmission Laboratory (CST Lab), Department of Electrical and Computer Engineering, University of Waterloo, Canada, working toward the Ph.D. degree. From 2007 to 2008, he worked at the Wireless Technology Laboratories, Nortel Networks, Ottawa, ON, Canada.  From 2008 to 2010, he was a post-doctoral fellow in the Department of Electrical Engineering and Computer Sciences at the University of California at Berkeley. Then, he joined Bell Labs, Holmdel, NJ, as a communication research scientist. Recently, he started working at Sharif University of Technology, as a faculty member.
 
Dr. Maddah-Ali is a recipient of NSERC Postdoctoral Fellowship in 2007, a best paper award from IEEE International Conference on Communications (ICC) in 2014, the IEEE Communications Society and IEEE Information Theory Society Joint Paper Award in 2015, and the IEEE Information Theory Society Joint Paper Award in 2016. 
\end{IEEEbiographynophoto}

\begin{IEEEbiographynophoto}{A. Salman Avestimehr} (S'03-M'08-SM'17) is an Associate Professor at the Electrical Engineering Department of University of Southern California. He received his Ph.D. in 2008 and M.S. degree in 2005 in Electrical Engineering and Computer Science, both from the University of California, Berkeley. Prior to that, he obtained his B.S. in Electrical Engineering from Sharif University of Technology in 2003. His research interests include information theory, the theory of communications, and their applications to distributed computing and data analytics.

Dr. Avestimehr has received a number of awards, including the Communications Society and Information Theory Society Joint Paper Award, the Presidential Early Career Award for Scientists and Engineers (PECASE) for ``pushing the frontiers of information theory through its extension to complex wireless information networks'', the Young Investigator Program (YIP) award from the U. S. Air Force Office of Scientific Research, the National Science Foundation CAREER award, and the David J. Sakrison Memorial Prize. He is currently an Associate Editor for the IEEE Transactions on Information Theory.
\end{IEEEbiographynophoto}

\end{document}